\newtheorem{lemma}{Lemma}
\newtheorem{theorem}{Theorem}
\newtheorem*{fact}{Fact}
\providecommand{\NN}{\mathbb{N}}
\providecommand{\RR}{\mathbb{R}}
\providecommand{\RRnn}{\RR_{\geq 0}}
\providecommand{\UU}{\mathcal{U}}
\providecommand{\otherU}{\mathcal{V}}
\providecommand{\otherUset}[1]{V_{#1}}
\providecommand{\II}{\mathcal{I}}
\providecommand{\mm}{\mathcal{M}}
\providecommand{\avgX}{\overline{X}}
\providecommand{\gapp}{\tilde{g}}
\providecommand{\ib}[1]{\llbracket #1 \rrbracket}
\providecommand{\elem}[1]{\mathfrak{#1}}
\providecommand{\numelem}[2]{h(#1 ; #2)}
\providecommand{\decor}[1]{\breve{#1}}
\providecommand{\ind}[1]{1_{#1}}
\providecommand{\sinit}{S_{\mathrm{init}}}
\providecommand{\mcoeff}[2]{m_{#1,#2}}
\providecommand{\Beta}{\mathrm{B}}
\DeclareMathOperator*{\EE}{\mathbb{E}}
\DeclareMathOperator*{\argmax}{argmax}
\DeclareMathOperator{\Ber}{Ber}
\title{Monotone Submodular Maximization over a Matroid \\ via Non-Oblivious Local Search}
\author{Yuval Filmus\thanks{\href{mailto:yuvalf@berkeley.edu}{yuvalf@berkeley.edu}} and Justin Ward\thanks{\href{mailto:J.D.Ward@dcs.warwick.ac.uk}{J.D.Ward@dcs.warwick.ac.uk} Work supported by EPSRC grant EP/J021814/1.
}}
\begin{document}
\maketitle

\begin{abstract}
We present an optimal, combinatorial $1-1/e$ approximation algorithm for monotone submodular optimization over a matroid constraint. Compared to the continuous greedy algorithm (Calinescu, Chekuri, P\'al and Vondr\'ak, 2008), our algorithm is extremely simple and requires no rounding. It consists of the greedy algorithm followed by local search. Both phases are run not on the actual objective function, but on a related auxiliary potential function, which is also monotone submodular. 

In our previous work on maximum coverage (Filmus and Ward, 2012), the potential function gives more weight to elements covered multiple times. We generalize this approach from coverage functions to arbitrary monotone submodular functions. When the objective function is a coverage function, both definitions of the potential function coincide.

Our approach generalizes to the case where the monotone submodular function has restricted curvature. For any curvature $c$, we adapt our algorithm to produce a $(1-e^{-c})/c$ approximation. This matches results of Vondr\'ak (2008), who has shown that the continuous greedy algorithm produces a $(1-e^{-c})/c$ approximation when the objective function has curvature $c$, and proved that achieving any better approximation ratio is impossible in the value oracle model.
\end{abstract}

\section{Introduction} \label{sec:introduction}
In this paper, we consider the problem of maximizing a monotone submodular function $f$, subject to a single matroid constraint.  Formally, let $\UU$ be a set of $n$ elements and let $f\colon 2^\UU \to \RR$ be a function assigning a value to each subset of $\UU$.  We say that $f$ is \emph{submodular} if 
\[
f(A) + f(B) \ge f(A \cup B) + f(A \cap B)
\]
for all $A, B \subseteq \UU$.  If additionally, $f(A) \le f(B)$ whenever $A \subseteq B$, we say that $f$ is \emph{monotone submodular}.  Submodular functions exhibit (and are, in fact, alternately characterized by) the property of diminishing returns---if $f$ is submodular then $f(A \cup \{x\}) - f(A) \le f(B \cup \{x\}) - f(B)$ for all $B \subseteq A$.  Hence, they are useful for modeling economic and game-theoretic scenarios, as well as various combinatorial problems. In a general monotone submodular maximization problem, we are given a value oracle for $f$ and a membership oracle for some distinguished collection $\II \subseteq 2^\UU$ of \emph{feasible sets}, and our goal is to find a member of $\II$ that maximizes the value of $f$. We assume further that $f$ is \emph{normalized} so that $f(\emptyset) = 0$.

We consider the restricted setting in which the collection $\II$ forms a matroid.
Matroids are intimately connected to combinatorial optimization: the problem of optimizing a linear function over a hereditary set system (a set system closed under taking subsets) is solved optimally for all possible functions by the standard greedy algorithm if and only if the set system is a matroid~\cite{Rado-1957,Edmonds-1971}.  

In the case of a \emph{monotone submodular} objective function, the standard greedy algorithm, which takes at each step the element yielding the largest increase in $f$ while maintaining independence, is (only) a $1/2$-approximation \cite{Fisher-1978}.  Recently, Calinescu et al.~\cite{Calinescu-2007,Vondrak-2008,Calinescu-2011} have developed a $(1 - 1/e)$-approximation for this problem via the \emph{continuous greedy algorithm}, which is essentially a steepest ascent algorithm running in continuous time (when implemented, a suitably discretized version is used), producing a fractional solution. The fractional solution is rounded using pipage rounding \cite{Ageev-2004} or swap rounding \cite{Chekuri-2010}.

Feige \cite{Feige-1998} has shown that improving the bound $(1 - 1/e)$ is NP-hard.   Nemhauser and Wolsey \cite{Nemhauser-1978a} have shown that any improvement over $(1 - 1/e)$  requires an exponential number of queries in the value oracle setting.

Following Vondr\'ak \cite{Vondrak-2008a}, we also consider the case when $f$ has restricted curvature. We say that $f$ has \emph{curvature $c$} if for any two disjoint $A,B \subseteq \UU$,
\[
 f(A \cup B) \geq f(A) + (1-c)f(B).
\]
When $c = 1$, this is a restatement of monotonicity of $f$, and when $c=0$, linearity of $f$. Vondr\'ak~\cite{Vondrak-2008a} has shown that the continuous greedy algorithm produces a $(1-e^{-c})/c$ approximation when $f$ has curvature~$c$. Furthermore, he has shown that any improvement over $(1-e^{-c})/c$ requires an exponential number of queries in the value oracle setting. 


\subsection{Our contribution} \label{sec:contribution}
In this paper, we propose a conceptually simple randomized polynomial time local search algorithm for the problem of monotone submodular matroid maximization.  Like the continuous greedy algorithm, our algorithm delivers the optimal $(1 - 1/e)$-approximation.  However, unlike the continuous greedy algorithm, our algorithm is entirely combinatorial, in the sense that it deals only with integral solutions to the problem and hence involves no rounding procedure.  As such, we believe that the algorithm may serve as a gateway to further improved algorithms in contexts where pipage rounding and swap rounding break down, such as submodular maximization subject to multiple matroid constraints.  

Our main results are a combinatorial $1 - 1/e - \epsilon$ approximation algorithm for monotone submodular matroid maximization, running in randomized time $\tilde{O}(\epsilon^{-3}r^4n)$, and a combinatorial $1 - 1/e$ approximation algorithm running in randomized time $\tilde{O}(r^7n^2)$, where $r$ is the rank of the given matroid and $n$ is the size of its ground set.  
Our algorithm further generalizes to the case in which the submodular function has curvature $c$\footnote{In fact, it is enough to assume that $f(A \cup B) \geq f(A) + (1-c) f(B)$ for any two disjoint \emph{independent sets} $A,B$.}. In this case the approximation ratios obtained are $(1-e^{-c})/c - \epsilon$ and $(1-e^{-c})/c$, respectively, again matching the performance of the continuous greedy algorithm \cite{Vondrak-2008a}.  Unlike the continuous greedy algorithm, our algorithm requires knowledge of $c$. However, by enumerating over values of $c$ we are able to obtain a combinatorial $(1-e^{-c})/c$ algorithm even in the case that $f$'s curvature is unknown.\footnote{For technical reasons, we require that $f$ has curvature bounded away from zero in this case.}

Our algorithmic approach is based on local search.  In classical local search, the algorithm starts at an arbitrary solution, and proceeds by iteratively making small changes that improve the objective function, until no such improvement can be made.  A natural, worst-case guarantee on the approximation performance of a local search algorithm is the \emph{locality ratio}, given as $\min f(S)/f(O)$, where $S$ is a locally optimal solution (i.e. a solution which cannot be improved by the small changes considered by the algorithm), $O$ is a global optimum, and $f$ is the objective function.  

In many cases, classical local search may have a very poor locality ratio, implying that a locally-optimal solution may be of significantly lower quality than the global optimum. For example, for monotone submodular maximization over a matroid, the locality ratio for an algorithm changing a single element at each step is $1/2$ \cite{Fisher-1978}.  \emph{Non-oblivious} local search, a technique first proposed by Alimonti~\cite{Alimonti-1994} and by Khanna, Motwani, Sudan and Vazirani~\cite{Khanna-1999},
attempts to avoid this problem by making use of a secondary potential function to guide the search.  By carefully choosing this auxiliary function, we ensure that poor local optima with respect to the original objective function are no longer local optima with respect to the new potential function.  This is the approach that we adopt in the design of our local search algorithm.  Specifically, we consider a simple local search algorithm in which the value of a solution is measured with respect to a carefully designed potential function $g$, rather than the submodular objective function $f$.  We show that solutions which are locally optimal with respect to $g$ have significantly higher worst-case quality (as measured by the problem's original potential function $f$) than those which are locally optimal with respect to $f$.

In previous work~\cite{Filmus-2012}, we designed an optimal non-oblivious local search algorithm for the restricted case of maximum coverage subject to a matroid constraint.  In this problem, we are given a weighted universe of elements, a collection of sets, and a matroid defined on this collection.  The goal is to find a collection of sets that is independent in the matroid and covers elements of maximum total weight.  The non-oblivious potential function used in~\cite{Filmus-2012} gives extra weight to solutions that cover elements multiple times.  That is, the potential function depends critically on the coverage representation of the objective function.  In the present work, we extend this approach to \emph{general} monotone submodular functions. This presents two challenges: defining a non-oblivious potential function without reference to the coverage representation, and analyzing the resulting algorithm.  

In order to define the general potential function, we construct a generalized variant of the potential function from~\cite{Filmus-2012} that does not require a coverage representation.  Instead, the potential function aggregates information obtained by applying the objective function to all subsets of the input, weighted according to their size.  Intuitively, the resulting potential function gives extra weight to solutions that contain a large number of good sub-solutions, or equivalently, remain good solutions in expectation when elements are removed by a random process.  An appropriate setting of the weights defining our potential function yields a function which coincides with the previous definition for coverage functions, but still makes sense for arbitrary monotone submodular functions.

The analysis of the algorithm in~\cite{Filmus-2012} is relatively straightforward. For each type of element in the universe of the coverage problem, we must prove a certain inequality among the coefficients defining the potential function. In the general setting, however, we need to construct a proof using only the inequalities given by monotonicity and submodularity.  The resulting proof is non-obvious and delicate.

\medskip

This paper extends and simplifies previous work by the same authors. The paper~\cite{Filmus-2012b}, appearing in FOCS~2012, only discusses the case $c=1$.  The general case is discussed in~\cite{Filmus-2012c}, which appears in ArXiv.  
The potential functions used to guide the non-oblivious local search in both the unrestricted curvature case \cite{Filmus-2012b} and the maximum coverage case \cite{Filmus-2012} are special cases of the function $g$ we discuss in the present paper.\footnote{The functions from~\cite{Filmus-2012b,Filmus-2012c} are defined in terms of certain coefficients $\gamma$, which depend on a parameter $E$. Our definition corresponds to the choice $E = e^c$.  We examine the case of coverage functions in more detail in Section~\ref{sec:maximum-coverage}.}
An exposition of the ideas of both~\cite{Filmus-2012} and~\cite{Filmus-2012c} can be found in the second author's thesis~\cite{Ward-2012a}. In particular, the thesis explains how the auxiliary objective function can be determined by solving a linear program, both in the special case of maximum coverage and in the general case of monotone submodular functions with restricted curvature.

\subsection{Related work} \label{sec:related-work}
Fisher, Nemhauser and Wolsey \cite{Nemhauser-1978,Fisher-1978} analyze greedy and local search algorithms for submodular maximization subject to various constraints, including single and multiple matroid constraints.  They obtain some of the earliest results in the area, including a $1/(k + 1)$-approximation algorithm for monotone submodular maximization subject to $k$ matroid constraints.  A recent survey by Goundan and Schulz \cite{Goundan-2007} reviews many results pertaining to the greedy algorithm for submodular maximization. 

More recently, Lee, Sviridenko and Vondr\'{a}k~\cite{Lee-2010a} consider the problem of both monotone and non-monotone submodular maximization subject to multiple matroid constraints, attaining a $1/(k + \epsilon)$-approx\-i\-mation for monotone submodular maximization subject to $k \ge 2$ constraints using local search.  Feldman et al.~\cite{Feldman-2011} show that a local search algorithm attains the same bound for the related class of $k$-exchange systems, which includes the intersection of $k$ strongly base orderable matroids, as well as the independent set problem in $(k + 1)$-claw free graphs.  Further work by Ward \cite{Ward-2012} shows that a non-oblivious local search routine attains an improved approximation ratio of $2/(k + 3) - \epsilon$ for this class of problems.

In the case of unconstrained non-monotone maximization, Feige, Mirrokni and Vondr\'{a}k \cite{Feige-2007} give a $2/5$-approximation algorithm via a randomized local search algorithm, and give an upper bound of $1/2$ in the value oracle model.  Gharan and Vondr\'{a}k \cite{Gharan-2011} improved the algorithmic result to $0.41$ by enhancing the local search algorithm with ideas borrowed from simulated annealing.  Feldman, Naor and Schwarz \cite{Feldman-2011a} later improved this to $0.42$ by using a variant of the continuous greedy algorithm. Buchbinder, Feldman, Naor and Schwartz have recently obtained an optimal $1/2$-approximation algorithm \cite{Feldman-2012}.

In the setting of constrained non-monotone submodular maximization, Lee et al.~\cite{Lee-2009} give a $1/(k+2+\frac{1}{k}+\epsilon)$-approximation algorithm for the case of $k$ matroid constraints and a $(1/5 - \epsilon)$-approximation algorithm for $k$ knapsack constraints.  Further work by Lee, Sviridenko and Vondr\'{a}k \cite{Lee-2010a} improves the approximation ratio in the case of $k$ matroid constraints to $1/(k + 1 + \frac{1}{k-1} + \epsilon)$.  Feldman et al.~\cite{Feldman-2011} attain this ratio for $k$-exchange systems.  In the case of non-monotone submodular maximization subject to a \emph{single} matroid constraint, Feldman, Naor and Schwarz \cite{Feldman-2011b} show that a version of the continuous greedy algorithm attains an approximation ratio of $1/e$. They additionally unify various applications of the continuous greedy algorithm and obtain improved approximations for non-monotone submodular maximization subject to a matroid constraint or $O(1)$ knapsack constraints.

\subsection{Organization of the paper} \label{sec:organization}

We begin by giving some basic definitions in Section \ref{sec:definitions}.  In Section \ref{sec:algorithm} we introduce our basic, non-oblivious local search algorithm, which makes use of an auxiliary potential function $g$.  In Section \ref{sec:g}, we give the formal definition of $g$, together with several of its properties. Unfortunately, exact computation of the function $g$ requires evaluating $f$ on an exponential number of sets.  In Section \ref{sec:algorithm-1}  we present a simplified analysis of our algorithm, under the assumption that an oracle for computing the function $g$ is given.  In Section \ref{sec:rp} we then show how to remove this assumption to obtain our main, randomized polynomial time algorithm.  The resulting algorithm uses a polynomial-time random sampling procedure to compute the function $g$ \emph{approximately}.  Finally, some simple extensions of our algorithm are described in Section \ref{sec:extensions}.

\section{Definitions} \label{sec:definitions}
\paragraph{Notation} If $B$ is some Boolean condition, then
\[ \ib{B} = \begin{cases} 1 & \text{if $B$ is true}, \\ 0 & \text{if $B$ is false.} \end{cases} \]
For $n$ a natural number, $[n] = \{1,\ldots,n\}$. We use $H_k$ to denote the $k$th Harmonic number,
\[ H_k = \sum_{t=1}^k \frac{1}{t}. \]
It is well-known that $H_k = \Theta(\ln k)$, where $\ln k$ is the natural logarithm.

For $S$ a set and $x$ an element, we use the shorthands $S + x = S \cup \{x\}$ and $S - x = S \setminus \{x\}$. We use the notation $S + x$ even when $x \in S$, in which case $S + x = S$, and the notation $S - x$ even when $x \notin S$, in which case $S - x = S$.

Let $\UU$ be a set. A \emph{set-function} $f$ on $\UU$ is a function $f\colon 2^{\UU} \rightarrow \RR$ whose arguments are subsets of $\UU$. For $x \in \UU$, we use $f(x) = f(\{x\})$. For $A,B \subseteq \UU$, the \emph{marginal} of $B$ with respect to $A$ is
\[ f_A(B) = f(A \cup B) - f(A). \]

\paragraph{Properties of set-functions} A set-function $f$ is \emph{normalized} if $f(\emptyset) = 0$. It is \emph{monotone} if whenever $A \subseteq B$ then $f(A) \leq f(B)$. It is \emph{submodular} if whenever $A \subseteq B$ and $C$ is disjoint from $B$, $f_A(C) \geq f_B(C)$. If $f$ is monotone, we need not assume that $B$ and $C$ are disjoint. Submodularity is equivalently characterized by the inequality \[ f(A) + f(B) \geq f(A\cup B) + f(A\cap B), \]
for all $A$ and $B$.

The set-function $f$ has \emph{curvature} $c$ if for all $A \subseteq \UU$ and $x \notin A$, $f_A(x) \geq (1-c) f(x)$. Equivalently, $f_A(B) \geq (1-c) f(B)$ for all disjoint $A,B \subseteq \UU$. Note that if $f$ has curvature $c$ and $c' \geq c$, then $f$ also has curvature $c'$. Every normalized monotone function thus has curvature $1$. A normalized function with curvature $0$ is linear; that is, $f_A(x) = f(x)$.


\paragraph{Matroids} A matroid $\mm=(\UU,\II)$ is composed of a ground set $\UU$ and a non-empty collection $\II$ of subsets of $\UU$ satisfying the following two properties: (1) If $A \in \II$ and $B \subseteq A$ then $B \in \II$; (2) If $A,B \in \II$ and $|A| > |B|$ then $B + x \in \II$ for some $x \in A \setminus B$.

The sets in $\II$ are called \emph{independent sets}. Maximal independent sets are known as \emph{bases}. Condition (2) implies that all bases of the matroid have the same size.  This common size is called the \emph{rank} of the matroid.

One simple example is a \emph{partition matroid}. The universe $\UU$ is partitioned into $r$ parts $\UU_1,\ldots,\UU_r$, and a set is independent if it contains at most one element from each part.

If $A$ is an independent set, then the \emph{contracted matroid} $\mm/A=(\UU\setminus A,\II/A)$ is given by
\[ \II/A = \{ B \subseteq \UU \setminus A : A \cup B \in \mm \}. \]

\paragraph{Monotone submodular maximization} An instance of monotone submodular maximization is given by $(\mm=(\UU,\II),f)$, where $\mm$ is a matroid and $f$ is a set-function on $\UU$ which is normalized, monotone and submodular.

The \emph{optimum} of the instance is
\[ f^* = \max_{O \in \II} f(O). \]
Because $f$ is monotone, the maximum is always attained at some basis.

We say that a set $S \in \II$ is an \emph{$\alpha$-approximate} solution if $f(S) \geq \alpha f(O)$. Thus $0 \leq \alpha \leq 1$.  We say that an algorithm has an \emph{approximation ratio} of $\alpha$ (or, simply that an algorithm \emph{provides an $\alpha$-approximation}) if it produces an $\alpha$-approximate solution on every instance.

\section{The algorithm} \label{sec:algorithm}
Our non-oblivious local search algorithm is shown in Algorithm~\ref{alg:1}.  The algorithm takes the following input parameters:
\begin{itemize}
\item A matroid $\mm = (\UU,\II)$, given as a ground set $\UU$ and a membership oracle for some collection $\II \subseteq 2^\UU$ of independent sets, which returns whether or not $X \in \II$ for any $X \subseteq \UU$.
\item A monotone submodular function $f\colon 2^\UU \to \RR_{\ge 0}$, given as a value oracle that returns $f(X)$ for any $X \subseteq \UU$.
\item An upper bound $c \in (0,1]$ on the curvature of $f$.  The case in which the curvature of $f$ is unrestricted corresponds to $c = 1$.
\item A convergence parameter $\epsilon$.
\end{itemize}
Throughout the paper, we let $r$ denote the rank of $\mm$ and $n = |\UU|$.
\begin{algorithm}
\label{alg:1}
\KwIn{$\mm = (\UU,\II),\ f,\ c,\ \epsilon$}
Set $\epsilon_1 = \frac{\epsilon}{rH_r}$\;
Let $\sinit$ be the result of running the 
standard greedy algorithm on $(\mm, g)$\;
$S \gets \sinit$\;
\Repeat{No exchange is made}
  {\ForEach{element $e \in S$ and $x \in \UU \setminus S$}{
    $S' \gets S - e + x$\;
    \If{$S' \in \II$ and $g(S') > (1 + \epsilon_1)g(S)$}{
      $S \gets S'$\;
      \Break\;
    }
  }
}
\Return $S$\;
\caption{The non-oblivious local search algorithm}
\end{algorithm}

The algorithm starts from an initial greedy solution $\sinit$, and proceeds by repeatedly exchanging one element $e$ in the current solution $S$ for one element $x$ not in $S$, with the aim of obtaining an improved independent set $S' \in \II$.  In both the initial greedy phase and the following local search phase, the quality of the solution is measured not with respect to $f$, but rather with respect to an auxiliary potential function $g$ (as we discuss shortly, we in fact must use an estimate $\gapp$ for $g$), which is determined by the rank of $\mm$ and the value of the curvature bound $c$.

We give a full definition of $g$ in Section \ref{sec:g}.  The function is determined by a sequence of coefficients depending on the upper bound $c$ on the curvature of $f$.  Evaluating the function $g$ exactly will require an exponential number of value queries to $f$.  Nonetheless, in Section \ref{sec:rp} we show how to modify Algorithm \ref{alg:1} by using a random sampling procedure to approximate $g$.  The resulting algorithm has the desired approximation guarantee with high probability and runs in polynomial time.

At each step we require that an improvement increase $g$ by a factor of at least $1 + \epsilon_1$.  This, together with the initial greedy choice of $\sinit$, ensures that Algorithm \ref{alg:1} converges in time polynomial in $r$ and $n$, at the cost of a slight loss in its locality gap.  In Section \ref{sec:extensions} we describe how the small resulting loss in the approximation ratio can be recovered, both in the case of Algorithm \ref{alg:1}, and in the randomized, polynomial-time variant we consider in Section \ref{sec:rp}.

\section{The auxiliary objective function \texorpdfstring{$g$}{g}} \label{sec:g}
We turn to the remaining task needed for completing the definition of Algorithm \ref{alg:1}: giving a definition of the potential function $g$.
The construction we use for $g$ will necessarily depend on $c$, but because we have fixed an instance, we shall omit this dependence from our notation, in order to avoid clutter.

\subsection{Definition of \texorpdfstring{$g$}{g}} \label{sec:g-definition}
We now present a definition of our auxiliary potential function $g$.  Our goal is to give extra value to solutions $S$ that are robust with respect to small changes.  That is, we would like our potential function to assign higher value to solutions that retain their quality even when some of their elements are removed by future iterations of the local search algorithm.  We model this general notion of robustness by considering a random process that obtains a new solution $T$ from the current solution $S$ by independently discarding each element of $S$ with some probability.  Then we use the expected value of $f(T)$ to define our potential function $g$

It will be somewhat more intuitive to begin by relating the \emph{marginals} $g_A$ of $g$ to the \emph{marginals} $f_A$ of $f$, rather than directly defining the values of $g$ and $f$.  We begin by considering some simple properties that we would like to hold for the marginals, and eventually give a concrete definition of $g$, showing that it has these properties.

Let $A$ be some subset of $\UU$ and consider an element $x \not\in A$.  We want to define the marginal value $g_A(x)$.  We consider a two-step  random process that first selects a probability $p$ from an appropriate continuous distribution, then a set $B \subseteq A$ by choosing each element of $A$ independently with some probability $p$.  We then define $g$ so that $g_A(x)$ is the expected value of $f_B(x)$ over the random choice of $B$.

Formally, let $P$ be a continuous distribution supported on $[0,1]$ with density given by $ce^{cx}/(e^c - 1)$.  Then, for each $A \subseteq \UU$, we consider the probability distribution $\mu_A$ on $2^A$  given by
\[
\mu_A(B) = \EE_{p \sim P}p^{|B|}(1 - p)^{|A| - |B|}.
\]
Note that this is simply the expectation over our initial choice of $p$ of the probability that the set $B$ is obtained from $A$ by randomly selecting each element of $A$ independently with probability $p$.  Furthermore, for any $A$ and any $A' \subseteq A$, if $B \sim \mu_A$ then $B \cap A' \sim \mu_{A'}$.

Given the distributions $\mu_A$, we shall construct a function $g$ so that
\begin{equation} \label{eq:g-marginals}
g_{A}(x) = \EE_{B \sim \mu_A}[f_B(x)].
\end{equation}
That is, the marginal value $g_{A}(x)$ is the expected marginal gain in $f$ obtained when $x$ is added to a random subset of $A$, obtained by the two-step experiment we have just described.

We can obtain some further intuition by considering how the distribution $P$ affects the values defined in \eqref{eq:g-marginals}.  In the extreme example in which $p = 1$ with probability 1, we have $g_{A}(x) = f_A(x)$ and so $g$ behaves exactly like the original submodular function.  Similarly, if $p = 0$ with probability 1, then $g_{A}(x) = f_\emptyset(x) = f(\{x\})$ for all $A$, and so $g$ is in fact a linear function.  Thus, we can intuitively think of the distribution $P$ as blending together the original function $f$ with some other ``more linear'' approximations of $f$, which have systematically reduced curvature.  We shall see that our choice of distribution results in a function $g$ that gives the desired locality gap.

It remains to show that it is possible to construct a function $g$ whose marginals satisfy \eqref{eq:g-marginals}.  In order to do this, we first note that the probability $\mu_A(B)$ depends only on $|A|$ and $|B|$.  Thus, if we define the values
\begin{equation*}
  \label{eq:m-coeffs}
\mcoeff{a}{b} = \EE_{p \sim P} p^b(1-p)^{a - b} = \int_{0}^1 \frac{ce^{cp}}{e^c - 1}\cdot p^b(1-p)^{a - b}\, dp
\end{equation*}
for all $a, b \ge 0$, then we have $\mu_A(B) =\mcoeff{|A|}{|B|}$.  We adopt the convention that $m_{a,b} = 0$ if either $a$ or $b$ is negative.  Then, we consider the function $g$ given by:
\begin{equation}
 g(A) = \sum_{B \subseteq A} \mcoeff{|A| - 1}{|B| - 1}f(B).
\label{eq:g-direct-definition}
\end{equation}
The marginals of this function are given by
\begin{align*}
g_A(x) &= g(A + x) - g(A) \\
&=\sum_{B \subseteq A+x} \mcoeff{|A|}{|B| - 1}f(B) - \sum_{B \subseteq A} \mcoeff{|A|-1}{|B| - 1}f(B) \\
&=\sum_{B \subseteq A} \left(\mcoeff{|A|}{|B|-1} - \mcoeff{|A|-1}{|B|-1}\right)f(B)
    + \mcoeff{|A|}{|B|}f(B + x).
\end{align*}
The term $\mcoeff{a}{b-1} - \mcoeff{a-1}{b-1}$ evaluates to
\begin{align*}
 \mcoeff{a}{b-1} - \mcoeff{a-1}{b-1} &= \EE_{p \sim P} [p^{b-1} (1-p)^{a-b+1} - p^{b-1} (1-p)^{a-b}] \\ &= \EE_{p \sim P}[-p^b (1-p)^{a-b}] \\ &= -\mcoeff{a}{b}.
\end{align*}
We conclude that
\begin{align*}
 g_A(x) &= \sum_{B \subseteq A} -\mcoeff{|A|}{|B|} f(B) + \mcoeff{|A|}{|B|}f(B + x) \\ &= \sum_{B \subseteq A} \mcoeff{|A|}{|B|} f_B(x) \\ &= \EE_{B \sim \mu_A}[f_B(x)].
\end{align*}

The values $\mcoeff{a}{b}$ used to define $g$ in \eqref{eq:g-direct-definition} can be computed from the following recurrence, which will also play a role in our analysis of the locality gap of Algorithm \ref{alg:1}.
\begin{lemma} \label{lem:m-recurrence}
$\mcoeff{0}{0} = 1$, and for $a > 0$ and $0 \leq b \leq a$,
 \[
  c\mcoeff{a}{b} = 
  (a-b)\mcoeff{a-1}{b} - b \mcoeff{a-1}{b-1} +
  \begin{cases}
  -c/(e^c-1) & \text{if } a = 0, \\
  0 & \text{if } 0 < a < b, \\
  ce^c/(e^c-1) & \text{if } a = b.
  \end{cases}
 \]
\end{lemma}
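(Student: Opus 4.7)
The plan is to derive the recurrence by integration by parts on the integral formula for $\mcoeff{a}{b}$. The base case is immediate: $\mcoeff{0}{0} = \int_0^1 \frac{ce^{cp}}{e^c-1}\,dp = 1$, which is just the statement that $P$ is a probability density on $[0,1]$.

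For the inductive step, starting from
\[
\mcoeff{a}{b} = \frac{c}{e^c-1}\int_0^1 e^{cp}\,p^b(1-p)^{a-b}\,dp,
\]
I would integrate by parts with $u = p^b(1-p)^{a-b}$ and $dv = e^{cp}\,dp$, so that $v = e^{cp}/c$ and
\[
u'(p) = b p^{b-1}(1-p)^{a-b} - (a-b) p^{b}(1-p)^{a-b-1},
\]
adopting the convention that when $b=0$ or $b=a$ the respective vanishing term is genuinely zero (so the formal derivative is honest). The two integrals produced on the right-hand side, after multiplying by the normalising factor $c/(e^c-1)$, are precisely $\mcoeff{a-1}{b-1}$ and $\mcoeff{a-1}{b}$ (using the convention $\mcoeff{a}{b}=0$ when $b<0$ or $b>a$ to handle the edge cases uniformly). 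Multiplying the whole identity by $c$ yields
\[
c\mcoeff{a}{b} = (a-b)\mcoeff{a-1}{b} - b\mcoeff{a-1}{b-1} + \frac{c^2}{e^c-1}\Bigl[\tfrac{e^{cp}}{c}\,p^b(1-p)^{a-b}\Bigr]_0^1,
\]
which already has the shape of the claimed recurrence.

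It then remains to evaluate the boundary term, and this is exactly what produces the three-way case split. At $p=1$ the factor $(1-p)^{a-b}$ kills the expression unless $a=b$, in which case the endpoint contributes $e^c/c$; at $p=0$ the factor $p^b$ kills the expression unless $b=0$, in which case the endpoint contributes $1/c$ with a minus sign. Multiplying by $c^2/(e^c-1)$ converts these into the constants $ce^c/(e^c-1)$ and $-c/(e^c-1)$ from the statement, while if both $b>0$ and $a>b$ the boundary term vanishes at both endpoints.

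The approach is entirely routine; there is no genuine obstacle. The only place to be careful is in the degenerate cases $b=0$ and $b=a$: one must check that treating the ``missing'' term in $u'(p)$ as zero is consistent with the extension $\mcoeff{a-1}{-1}=\mcoeff{a-1}{a}=0$, so that the same recurrence formula is valid uniformly across all admissible $(a,b)$.
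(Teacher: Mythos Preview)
Your proposal is correct and follows essentially the same approach as the paper: both compute the base case directly and establish the recurrence by integration by parts, differentiating $p^b(1-p)^{a-b}$ and integrating the exponential factor, with the boundary term producing the case split. The only cosmetic difference is that the paper absorbs the normalising constant into $dv$ from the outset, whereas you pull it out first; the computations are otherwise identical.
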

\begin{proof}
For the base case, we have
\[
m_{0,0} = \int_{0}^1\frac{ce^{cp}}{e^c - 1}\,dp = 1.
\]
The proof of the general case follows from a simple integration by parts:
\begin{align*}
 c\mcoeff{a}{b} &= c \int_0^1 \frac{ce^{cp}}{e^c-1} \cdot p^b (1-p)^{a-b} \, dp \\ &=
 \left.c \cdot \frac{e^{cp}}{e^c-1} \cdot p^b (1-p)^{a-b}\right|^{p =1}_{p=0} -
 c \int_0^1 \left[ bp^{b-1} (1-p)^{a-b} - (a-b)p^b (1-p)^{a-b-1} \right]\frac{e^{cp}}{e^c-1} \, dp \\ &=
 \frac{\ib{a = b} ce^c - \ib{b = 0} c}{e^c-1} + (a-b) \mcoeff{a-1}{b} - b \mcoeff{a-1}{b-1}. \qedhere
\end{align*}
\end{proof}

In future proofs, we shall also need the following upper bound on the sum of the coefficients appearing in \eqref{eq:g-direct-definition}.  Define
\[
\tau(A) = \sum_{B \subseteq A}\mcoeff{|A| - 1}{|B| - 1}.
\]
\begin{lemma}\label{lem:mcoeff-bound}
For all $A \subseteq \UU$,
\[
\tau(A) \le \frac{ce^c}{e^c-1}H_{|A|}
\]
\end{lemma}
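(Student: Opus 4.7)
The plan is to reduce $\tau(A)$ to a one-dimensional integral and then bound it by exploiting the trivial inequality $e^{cp} \le e^c$ on $[0,1]$.

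First, I would group the sum defining $\tau(A)$ by $|B|$. Writing $a = |A|$ and noting that $m_{a-1,-1}=0$ by convention, one gets
\[
\tau(A) = \sum_{k=1}^{a} \binom{a}{k} \mcoeff{a-1}{k-1}.
\]
Plugging in the integral definition of $\mcoeff{a-1}{k-1}$ and interchanging sum and integral, this becomes
\[
\tau(A) = \int_0^1 \frac{c e^{cp}}{e^c-1} \sum_{k=1}^{a} \binom{a}{k} p^{k-1}(1-p)^{a-k} \, dp.
\]

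The next step is to simplify the inner polynomial in $p$. Factoring out $1/p$, the remaining sum is $\sum_{k=1}^{a} \binom{a}{k} p^k (1-p)^{a-k} = 1 - (1-p)^a$ by the binomial theorem, so
\[
\tau(A) = \frac{c}{e^c-1} \int_0^1 e^{cp} \cdot \frac{1 - (1-p)^a}{p} \, dp.
\]
(Note the integrand is bounded as $p \to 0$ since $1-(1-p)^a = ap + O(p^2)$.) Now I would use the uniform bound $e^{cp} \le e^c$ on $[0,1]$, which is valid because $c \ge 0$, to obtain
\[
\tau(A) \le \frac{c e^c}{e^c-1} \int_0^1 \frac{1 - (1-p)^a}{p} \, dp.
\]

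Finally, I would evaluate the remaining integral. The change of variables $q = 1-p$ turns it into
\[
\int_0^1 \frac{1 - q^a}{1 - q} \, dq = \int_0^1 \sum_{j=0}^{a-1} q^j \, dq = \sum_{j=0}^{a-1} \frac{1}{j+1} = H_a,
\]
which gives the desired bound $\tau(A) \le \frac{ce^c}{e^c-1} H_{|A|}$. None of the steps is really an obstacle; the only mildly nontrivial point is recognizing that the weighted sum over $B \subseteq A$ collapses to the clean kernel $(1-(1-p)^a)/p$ once one pulls the integral outside, after which the harmonic number appears through a standard identity.
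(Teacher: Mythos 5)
Your proof is correct and follows essentially the same route as the paper's: group the sum over $B$ by $|B|$, insert the integral definition of $\mcoeff{a-1}{k-1}$, and apply the bound $e^{cp}\le e^c$ on $[0,1]$. The only (harmless) difference is in the final evaluation: the paper bounds each term via Euler's Beta integral, obtaining $\sum_{k=1}^{|A|}\tfrac1k$ directly, whereas you first collapse the sum with the binomial theorem to the kernel $\bigl(1-(1-p)^{|A|}\bigr)/p$ and then integrate it to $H_{|A|}$.
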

\begin{proof}
Expanding the definition of $\mcoeff{|A|-1}{|B|-1}$ we obtain
 \begin{align*}
  \sum_{B \subseteq A} \mcoeff{|A| - 1}{|B| - 1} &=
  \sum_{k=1}^{|A|} \binom{|A|}{k} \mcoeff{|A|-1}{k-1} \\ &=
  \sum_{k=1}^{|A|} \binom{|A|}{k} \int_0^1  \frac{ce^{cp}}{e^c-1} \cdot p^{k-1} (1-p)^{|A|-k} \, dp  \\ & \le
  \frac{ce^c}{e^c-1} \sum_{k=1}^{|A|} \binom{|A|}{k} \int_0^1 p^{k-1} (1-p)^{|A|-k} \, dp  \\ &=
  \frac{ce^c}{e^c-1} \sum_{k=1}^{|A|} \binom{|A|}{k}\frac{(k-1)!(|A| - k)!}{(|A|)!}  \\ &=
  \frac{ce^c}{e^c-1} \sum_{k=1}^{|A|} \frac{1}{k}, 
 \end{align*}
where in the penultimate line, we have used Euler's Beta integral:
\[
\Beta(x,y) = \int_{0}^1 t^{x-1}(1-t)^{y-1}\,dt = \frac{(x-1)!(y-1)!}{(x + y - 1)!},
\] 
whenever $x,y$ are positive integers.
\end{proof}

\subsection{Properties of \texorpdfstring{$g$}{g}} \label{sec:g-properties}

We now show that our potential function $g$ shares many basic properties with $f$.

\begin{lemma} \label{lem:g-properties}
 The function $g$ is normalized, monotone, submodular and has curvature at most $c$.
\end{lemma}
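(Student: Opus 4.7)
The plan is to leverage the marginal characterization $g_A(x) = \EE_{B \sim \mu_A}[f_B(x)]$ that was already established above; each of the four properties then reduces directly to the corresponding property of $f$ under the expectation.

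First, I would dispatch \emph{normalization} by plugging $A = \emptyset$ into~\eqref{eq:g-direct-definition}: the only term in the sum corresponds to $B = \emptyset$, with coefficient $\mcoeff{-1}{-1} = 0$ by the convention that $\mcoeff{a}{b} = 0$ when either index is negative (alternatively, the single surviving term is $f(\emptyset) = 0$ since $f$ is normalized). Next, for \emph{monotonicity}, it suffices to show $g_A(x) \ge 0$ for every $A$ and every $x \notin A$. Since $f$ is monotone, $f_B(x) \ge 0$ for every $B \subseteq \UU$ and every $x$, so the expectation $g_A(x) = \EE_{B \sim \mu_A}[f_B(x)]$ is nonnegative.

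For \emph{submodularity}, I want to show $g_A(x) \ge g_{A'}(x)$ whenever $A \subseteq A'$ and $x \notin A'$. The idea is to use the coupling observation made earlier: if $B' \sim \mu_{A'}$, then $B' \cap A \sim \mu_A$. So I would sample a single $B' \sim \mu_{A'}$ and set $B = B' \cap A$; under this joint coupling $B \subseteq B'$ (and both avoid $x$, since $x \notin A'$). Then submodularity of $f$ gives $f_B(x) \ge f_{B'}(x)$ pointwise, and taking expectation yields $g_A(x) \ge g_{A'}(x)$. Finally, for \emph{curvature}, I would first observe that $g(x) = g_\emptyset(x) = \EE_{B \sim \mu_\emptyset}[f_B(x)] = f_\emptyset(x) = f(x)$, since $\mu_\emptyset$ is supported on $\{\emptyset\}$. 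Then, because $f$ itself has curvature at most $c$, we have $f_B(x) \ge (1-c) f(x)$ for every $B$ with $x \notin B$; averaging over $B \sim \mu_A$ gives $g_A(x) \ge (1-c) f(x) = (1-c) g(x)$, which is precisely the curvature-$c$ condition.

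None of the four steps is really an obstacle — the work was already done in deriving the clean marginal formula $g_A(x) = \EE_{B \sim \mu_A}[f_B(x)]$, together with the fact that $\mu_{A' \cap \cdot} = \mu_A$ restricted appropriately. The only minor subtlety is making sure, in the submodularity step, that one samples from the \emph{larger} set $A'$ and intersects down to $A$, rather than the reverse — sampling independently from $\mu_A$ and $\mu_{A'}$ would not give any inclusion between $B$ and $B'$, so the coupling is essential.
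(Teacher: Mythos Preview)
Your proof is correct and follows essentially the same argument as the paper: both use the marginal formula $g_A(x) = \EE_{B \sim \mu_A}[f_B(x)]$ together with the restriction property of $\mu$ to reduce each of the four properties (normalized, monotone, submodular, curvature $\le c$) directly to the corresponding property of $f$. Your explicit remark about the coupling in the submodularity step is a nice clarification, but otherwise the argument matches the paper's proof line by line.
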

\begin{proof}
From \eqref{eq:g-direct-definition} we have $g(\emptyset) = \mcoeff{-1}{-1}f(\emptyset) = 0$.  Thus, $g$ is normalized.    Additionally, \eqref{eq:g-marginals} immediately implies that $g$ is monotone, since the monotonicity of $f$ implies that each term $f_B(x)$ is non-negative.  Next, suppose that $A_1 \subseteq A_2$ and $x \notin A_2$. Then from \eqref{eq:g-marginals}, we have
 \[
  g_{A_2}(x) = \EE_{B \sim \mu_{A_2}} f_B(x) \leq
  \EE_{B \sim \mu_{A_2}} f_{B \cap A_1}(x) =
  \EE_{B \sim \mu_{A_1}} f_B(x) = g_{A_1}(x),
 \]
where the inequality follows from submodularity of $f$.  Thus, $g$ is submodular. Finally, for any set $A \subseteq \UU$ and any element $x \notin A$, we have
 \[
  g_A(x) = \EE_{B \sim \mu_A} f_B(x) \geq (1-c) f(x) = (1-c) g(x),
 \]
where the inequality follows from the bound on the curvature of $f$, and the second equation from setting $A = \emptyset$ in \eqref{eq:g-marginals}.  Thus, $g$ has curvature at most $c$.
In fact, it is possible to show that for any given $|A|$, $g$ has slightly lower curvature than $f$, corresponding to our intuition that the distribution $P$ blends together $f$ and various functions of reduced curvature.  For our purposes, however, an upper bound of $c$ is sufficient.
\end{proof}

Finally, we note that for any $S \subseteq \UU$, it is possible to bound the value $g(S)$ relative to $f(S)$.

\begin{lemma} \label{lem:g-magnitude}
 For any $A \subseteq \UU$,
 \[ f(A) \leq g(A) \leq \frac{ce^c}{e^c-1} H_{|A|} f(A). \]
\end{lemma}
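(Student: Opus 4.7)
The plan is to prove the two inequalities separately, with the lower bound coming from a telescoping marginal argument and the upper bound coming directly from the explicit formula~\eqref{eq:g-direct-definition} combined with Lemma~\ref{lem:mcoeff-bound}.

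For the lower bound $f(A) \leq g(A)$, the key observation is that the marginal formula \eqref{eq:g-marginals} together with submodularity of $f$ yields $g_A(x) \geq f_A(x)$ whenever $x \notin A$. Indeed, every $B$ in the support of $\mu_A$ is a subset of $A$, so submodularity gives $f_B(x) \geq f_A(x)$, and averaging preserves the inequality. Fixing any ordering $A = \{x_1, \ldots, x_k\}$ and letting $A_i = \{x_1, \ldots, x_i\}$, both $f(A)$ and $g(A)$ telescope as sums of marginals; since $g$ is normalized (Lemma~\ref{lem:g-properties}), this yields
\[
g(A) = \sum_{i=1}^{k} g_{A_{i-1}}(x_i) \geq \sum_{i=1}^{k} f_{A_{i-1}}(x_i) = f(A).
\]

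For the upper bound, I would go back to the explicit sum \eqref{eq:g-direct-definition}. First note that the coefficients $\mcoeff{a}{b}$ are non-negative (they are integrals of non-negative functions), and the term for $B = \emptyset$ vanishes by the convention $\mcoeff{|A|-1}{-1} = 0$. Combined with monotonicity of $f$, which gives $f(B) \leq f(A)$ for every $B \subseteq A$, we obtain
\[
g(A) = \sum_{B \subseteq A} \mcoeff{|A|-1}{|B|-1} f(B) \leq f(A) \sum_{B \subseteq A} \mcoeff{|A|-1}{|B|-1} = \tau(A) \cdot f(A).
\]
Applying Lemma~\ref{lem:mcoeff-bound} to bound $\tau(A) \leq \frac{ce^c}{e^c-1} H_{|A|}$ completes the proof.

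Neither direction is really an obstacle: once Lemmas~\ref{lem:g-properties} and~\ref{lem:mcoeff-bound} are in hand, the only ``creative'' step is noticing that $g_A(x) \geq f_A(x)$ pointwise in $x$, which is essentially immediate from submodularity and the fact that $\mu_A$ is supported on subsets of $A$.
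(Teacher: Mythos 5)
Your proof is correct and follows essentially the same route as the paper: the lower bound via the pointwise inequality $g_{A_{i-1}}(x_i) \geq f_{A_{i-1}}(x_i)$ from \eqref{eq:g-marginals} and submodularity, summed telescopically using normalization, and the upper bound via \eqref{eq:g-direct-definition}, monotonicity of $f$, and Lemma~\ref{lem:mcoeff-bound}. No gaps to report.
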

\begin{proof}
Let $A = \{a_1,\ldots,a_{|A|}\}$ and define $A_i = \{a_1,\ldots,a_i\}$ for $0 \le i \le |A|$.  The formula \eqref{eq:g-marginals} implies that 
 \[ g_{A_i}(a_{i+1}) = \EE_{B \sim \mu_{A_i}} f_B(a_{i+1}) \geq f_{A_i}(a_{i+1}). \]
Summing the resulting inequalities for $i = 0$ to $|A| - 1$, we get
\[
g(A) - g(\emptyset) \ge f(A) - f(\emptyset).
\]
The lower bound then follows from the fact that both $g$ and $f$ are normalized, so $g(\emptyset) = f(\emptyset) = 0$.

For the upper bound, \eqref{eq:g-direct-definition} and monotonicity of $f$ imply that
 \[
  g(A) = \sum_{B \subseteq A}\mcoeff{|A|-1}{|B|-1}f(B) \leq f(A) \sum_{B \subseteq A} \mcoeff{|A|-1}{|B|-1}.
 \]
The upper bound then follows directly from applying the bound of Lemma \ref{lem:mcoeff-bound} to the final sum.
\end{proof}




\subsection{Approximating \texorpdfstring{$g$}{g} via Sampling} \label{sec:g-evaluation}

Evaluating $g(A)$ exactly requires evaluating $f$ on all subsets $B \subseteq A$, and so we cannot compute $g$ directly without using an exponential number of calls to the value oracle $f$.  We now show that we can efficiently estimate $g(A)$ by using a sampling procedure that requires evaluating $f$ on only a polynomial number of sets $B \subseteq A$.  In Section \ref{sec:rp}, we show how to use this sampling procedure to obtain a randomized variant of Algorithm \ref{alg:1} that runs in polynomial time.

We have already shown how to construct the function $g$, and how to interpret the marginals of $g$ as the expected value of a certain random experiment.  Now we show that the direct definition of $g(A)$ in \eqref{eq:g-direct-definition} can also be viewed a the result of a random experiment.  

For a set $A$, consider the distribution $\nu_A$ on $2^{A}$ given by
\[
\nu_A(B) = \frac{\mcoeff{|A|-1}{|B|-1}}{\tau(A)}.
\]
Then, recalling the direct definition of $g$, we have:
\begin{equation*}
 g(A) = \sum_{B \subseteq A} \mcoeff{|A| - 1}{|B| - 1}f(B) 
        = \tau(A)\EE_{B \sim \nu_A}[f(B)]
\end{equation*}

We can estimate $g(A)$ to any desired accuracy by sampling from the distribution $\nu_A$.  Let $B_1,\ldots,B_N$ be $N$ independent random samples from $\nu_A$.  Then, we define:
\begin{equation}
\gapp(A) = \tau(A)\frac{1}{N}\sum_{i = 1}^Nf(B_i)
\end{equation}

\begin{lemma} \label{lem:g-sampling}
Choose $M,\epsilon > 0$, and set
\[N = \frac{1}{2}\left(\frac{ce^c}{e^c - 1}\cdot \frac{H_n}{\epsilon}\right)^2\ln M.\]
Then,
\[ \Pr[|\gapp(A) - g(S)| \geq \epsilon g(S)] = O\left(M^{-1}\right). \]
\end{lemma}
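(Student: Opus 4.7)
The plan is to recognize $\gapp(A)$ as a sample mean of $N$ i.i.d.\ bounded random variables and invoke Hoeffding's inequality. Define $X_i = f(B_i)$, where $B_i \sim \nu_A$ are independent. Since $f$ is normalized, monotone, and each $B_i \subseteq A$, we have $0 \le X_i \le f(A)$, so each $X_i$ takes values in an interval of length $f(A)$. By construction, $\EE[X_i] = g(A)/\tau(A)$, and
\[
\gapp(A) - g(A) = \tau(A)\left(\frac{1}{N}\sum_{i=1}^N X_i - \EE[X_1]\right).
\]

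First I would translate the target event $|\gapp(A) - g(A)| \ge \epsilon g(A)$ into a deviation bound on the sample mean $\bar X = \tfrac{1}{N}\sum_i X_i$: it is equivalent to $|\bar X - \EE[X_1]| \ge \epsilon g(A)/\tau(A)$. Applying Hoeffding's inequality (for variables in $[0, f(A)]$) gives
\[
\Pr[|\gapp(A) - g(A)| \ge \epsilon g(A)] \le 2\exp\!\left(-\frac{2N\,\epsilon^2\, g(A)^2}{\tau(A)^2\, f(A)^2}\right).
\]

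The main thing to check is that the chosen $N$ makes the exponent at least $\ln M$ (up to a constant). For this I would use the two bounds already established: Lemma~\ref{lem:g-magnitude} yields $g(A) \ge f(A)$, so $g(A)/f(A) \ge 1$, and Lemma~\ref{lem:mcoeff-bound} gives $\tau(A) \le \frac{ce^c}{e^c-1} H_{|A|} \le \frac{ce^c}{e^c-1} H_n$. Combining these,
\[
\frac{g(A)^2}{\tau(A)^2 f(A)^2} \ge \frac{1}{\tau(A)^2} \ge \left(\frac{e^c-1}{c e^c H_n}\right)^2,
\]
so with the prescribed $N = \tfrac{1}{2}\bigl(\tfrac{ce^c}{e^c-1} \cdot \tfrac{H_n}{\epsilon}\bigr)^2 \ln M$ the exponent is at least $\ln M$, and the probability is bounded by $2/M = O(M^{-1})$.

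No step is really hard; the only mild subtlety is that we want a \emph{relative} error $\epsilon g(A)$ while Hoeffding naturally delivers an \emph{absolute} error bound in terms of the range $f(A)$. The ratio $g(A)/f(A)$ could in principle be as large as $\Theta(H_n)$, which would only help, so the worst case for the argument is when $g(A) \approx f(A)$; it is precisely the lower bound $g(A) \ge f(A)$ from Lemma~\ref{lem:g-magnitude} that lets us absorb this and obtain a clean bound depending only on $\tau(A)$, and hence on the universal quantity $\tfrac{ce^c}{e^c-1} H_n$ appearing in the statement.
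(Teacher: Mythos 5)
Your proof is correct and follows essentially the same route as the paper: Hoeffding's inequality applied to the i.i.d. samples $f(B_i)$ (bounded by $f(A)$ via monotonicity), with the lower bound $g(A) \ge f(A)$ from Lemma~\ref{lem:g-magnitude} converting the relative-error target into a usable deviation bound, and Lemma~\ref{lem:mcoeff-bound} bounding $\tau(A)$ to make the exponent at least $\ln M$. The only difference is cosmetic: the paper scales the variables as $\tau(A) f(B_i)$ and uses a relative-error form of Hoeffding, whereas you keep $X_i = f(B_i)$ and rescale the deviation, which amounts to the same calculation.
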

\begin{proof}
We use the following version of Hoeffding's bound.
\begin{fact}[Hoeffding's bound]
  Let $X_1,\ldots,X_N$ be i.i.d. non-negative random variables bounded by $B$, and let $\avgX$ be their
 average.  Suppose that $\EE\avgX \geq \rho B$. Then, for any $\epsilon > 0$,
\begin{equation*} \label{eq:hoeffding}
\Pr[|\avgX - \EE\avgX| \ge \epsilon \EE\avgX] \leq 2\exp\left( -2\epsilon^2\rho^2N \right).
\end{equation*}
\end{fact}
Consider the random variables $X_i = \tau(A)f(B_i)$.  Because $f$ is monotone and each $B_i$ is a subset of $A$, each $X_i$ is bounded by $\tau(A)f(A)$.  The average $\avgX$ of the values $X_i$ satisfies \[\EE\avgX = g(A) \ge f(A), \]
where the inequality follows from Lemma \ref{lem:g-magnitude}.  Thus, Hoeffding's bound implies that
\begin{equation*}
\Pr[|\avgX - \EE\avgX| \ge \epsilon \EE\avgX] \leq 2\exp \left( -\frac{2\epsilon^2N}{\tau(A)^2} \right).
\end{equation*}
By Lemma \ref{lem:mcoeff-bound} we have $\tau(A) \le \frac{ce^c}{e^c - 1}H_{|A|} \le \frac{ce^c}{e^c - 1}H_n$ and so 
\[
2\exp \left( -\frac{2\epsilon^2N}{\tau(A)^2}\right) \le 
2\exp\left(-\ln M\right) = O\left(M^{-1}\right). \qedhere
\]
\end{proof}

\section{Analysis of Algorithm \ref{alg:1}}
\label{sec:algorithm-1}

We now give a complete analysis of the runtime and approximation performance of Algorithm \ref{alg:1}.   The algorithm has two phases: a greedy phase and a local search phase.  Both phases are guided by the  auxiliary potential function $g$ defined in Section \ref{sec:g}.  As noted in Section \ref{sec:g-evaluation}, we cannot, in general, evaluate $g$ in polynomial time.  We postpone concerns dealing with approximating $g$ by sampling until the next section, and in this section suppose that we are given a value oracle returning $g(A)$ for any set $A \subseteq \UU$.  We then show that Algorithm~\ref{alg:1} requires only a polynomial number of calls to the oracle for $g$.  In this way, we can present the main ideas of the proofs without a discussion of the additional parameters and proofs necessary for approximating $g$ by sampling.  In the next section we use the results of Lemma \ref{lem:g-sampling} to implement an approximate oracle for $g$ in polynomial time, and adapt the proofs given here to obtain a 
randomized, polynomial time algorithm.

Consider an arbitrary input to the algorithm.
Let $S = \{s_1,\ldots,s_r\}$ be the solution returned by Algorithm \ref{alg:1} on this instance and $O$ be an optimal solution to this instance.  It follows directly from the definition of the standard greedy algorithm and the type of exchanges considered by Algorithm \ref{alg:1} that $S$ is a base.  Moreover,  because $f$ is montone, we may assume without loss of generality that $O$ is a base, as well.  We index the elements $o_i$ of $O$ by using the following lemma of Brualdi~\cite{Brualdi-1969}.
\begin{fact}[Brualdi's lemma]
Suppose $A,B$ are two bases in a matroid. There is a bijection $\pi\colon A \rightarrow B$ such that for all $a \in A$, $A - a + \pi(a)$ is a base. Furthermore, $\pi$ is the identity on $A \cap B$.
\end{fact}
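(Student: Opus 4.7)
The plan is to produce the bijection $\pi$ by a Hall-style perfect matching argument on a suitable bipartite exchange graph. Since $\pi$ is required to act as the identity on $A \cap B$, and $A - a + a = A$ is trivially a base for such $a$, the real task is to produce a bijection between $A \setminus B$ and $B \setminus A$ under which $A - a + \pi(a)$ remains a base. I would define a bipartite graph $H$ with parts $A \setminus B$ and $B \setminus A$, placing an edge $(a,b)$ whenever $A - a + b$ is a base, and then show that $H$ admits a perfect matching via Hall's theorem.

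To verify Hall's condition, fix $S \subseteq A \setminus B$ and consider the independent set $A \setminus S$. Because $S$ is disjoint from $B$, the set $A \setminus S$ already contains all of $A \cap B$, so when we extend $A \setminus S$ to a base using elements of $B$, the newly added elements must come from $B \setminus A$. Call this set of new elements $T$, so $|T| = |S|$, $T \subseteq B \setminus A$, and $A^* := (A \setminus S) \cup T$ is a base. It remains to argue $T \subseteq N_H(S)$, which will give $|N_H(S)| \ge |T| = |S|$. For each $t \in T$, I would examine the unique circuit $C$ contained in $A + t$ and passing through $t$; if $C \cap S$ were empty then $C - t \subseteq A \setminus S$ would give $C \subseteq A^*$, contradicting independence of the base $A^*$. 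Hence some $a \in C \cap S$ exists, and since $a$ lies on a circuit of $A + t$, the set $A + t - a$ is independent of size $r$, i.e., a base. This means $(a,t)$ is an edge of $H$, so $t \in N_H(S)$.

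The main obstacle is arranging the Hall's-condition verification so that the extending set $T$ lives entirely in $B \setminus A$ and so that each $t \in T$ is witnessed to have a neighbor lying specifically in $S$ rather than merely in $A$; the cleverness is in choosing $A \setminus S$ (instead of, say, $A \cap B$) as the set to augment and in using the unique-circuit step to force the exchange partner into $C \cap S$. Once Hall's condition is in hand, Hall's theorem produces the required perfect matching in $H$, and splicing it together with the identity map on $A \cap B$ delivers the bijection $\pi\colon A \to B$ with the claimed properties.
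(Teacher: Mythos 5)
Your argument is correct and complete. Note that the paper itself offers no proof of this statement: it is quoted as a known fact with a citation to Brualdi's 1969 paper, so there is no in-paper argument to compare against. What you have written is the standard self-contained proof of Brualdi's exchange lemma: reduce to a bijection between $A \setminus B$ and $B \setminus A$ (the identity handles $A \cap B$, where $A - a + a = A$ is trivially a base), form the bipartite graph with an edge $(a,b)$ exactly when $A - a + b$ is a base, and verify Hall's condition. Your verification is sound: for $S \subseteq A \setminus B$, extending the independent set $A \setminus S$ to a base inside $B$ adds a set $T \subseteq B \setminus A$ with $|T| = |S|$ (here you correctly use $S \cap B = \emptyset$), and for each $t \in T$ the fundamental circuit $C$ of $t$ with respect to the base $A$ must meet $S$, since otherwise $C \subseteq (A \setminus S) + t$ would sit inside the base $(A \setminus S) \cup T$; removing any $a \in C \cap S$ from $A + t$ destroys the unique circuit, so $A - a + t$ is an independent set of size $r$, hence a base, giving the needed edge. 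Hall's theorem then yields a matching saturating $A \setminus B$, which is perfect because $|A \setminus B| = |B \setminus A|$, and splicing with the identity on $A \cap B$ gives $\pi$. All steps (uniqueness of the fundamental circuit, independence after deleting one element of it, the size count) are justified, so the proposal stands as a valid replacement for the external citation.
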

The main difficulty in bounding the locality ratio of Algorithm~\ref{alg:1} is that we must bound the ratio $f(S)/f(O)$ stated in terms of $f$, by using only the fact that $S$ is locally optimal with respect to $g$.  Thus, we must somehow relate the values of $f(S)$ and $g(S)$.    In the following theorem relates the values of $f$ and $g$ on arbitrary bases of a matroid.  Later, we shall apply this theorem to $S$ and $O$ to obtain an approximation guarantee both for Algorithm \ref{alg:1} and for the randomized variant presented in the next section.

\begin{theorem}
Let $A = \{a_1,\ldots,a_r\}$ and $B = \{b_1,\ldots,b_r\}$  be any two bases of $\mm$, and suppose that we index the elements of $B$ so that $b_i = \pi(a_i)$, where $\pi : A \to B$ is the bijection guaranteed by Brualdi's lemma.  Then,
\label{thm:fg-related}
 \[
  \frac{ce^c}{e^c-1} f(A) \geq f(B) + \sum_{i=1}^r [g(A) - g(A - a_i + b_i)].
 \]
\end{theorem}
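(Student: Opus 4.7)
The plan is to split $\sum_i[g(A)-g(A-a_i+b_i)]=\sum_i g_{A-a_i}(a_i)-\sum_i g_{A-a_i}(b_i)$ and bound each sum separately. For the first sum, I would use the integral form $g_{A-a}(a)=\int_0^1 \tfrac{ce^{cp}}{e^c-1}\,\EE_{D\subseteq A-a,\,p}[f_D(a)]\,dp$ together with the identity $\sum_{a\in A}\EE_{D\subseteq A-a,\,p}[f_D(a)]=h_A'(p)$, where $h_A(p):=\EE_{D\subseteq A,\,p}[f(D)]$ is the multilinear extension restricted to the diagonal through $A$. Integration by parts, exploiting that $\tfrac{d}{dp}\tfrac{ce^{cp}}{e^c-1}=c\cdot\tfrac{ce^{cp}}{e^c-1}$ and that $h_A(0)=0$, $h_A(1)=f(A)$, should give the clean identity
\[
\sum_{a\in A} g_{A-a}(a)=\frac{ce^c}{e^c-1}f(A)-c\bar f,\qquad \bar f:=\int_0^1\frac{ce^{cp}}{e^c-1}h_A(p)\,dp.
\]
Substituting into the theorem reduces everything to the single inequality $\sum_i g_{A-a_i}(b_i)+c\bar f\ge f(B)$.

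For the reduced inequality, the plan is to prove the pointwise estimate $\sum_i f_{D\setminus a_i}(b_i)+cf(D)\ge f(B)$ for every $D\subseteq A$ and then take the expectation over $D\sim\mu_A$. The key refinement for the pointwise bound is that while submodularity gives $f_{D\setminus a_i}(b_i)\ge f_D(b_i)$ in general, one can extract an extra $f_{D-a_i}(a_i)$ precisely when $a_i\in D\cap A\cap B$ (in that case $b_i=a_i$ and $f_D(b_i)=0$, while $f_{D-a_i}(a_i)$ is the full marginal). Summing over $i$ and then applying (i) submodularity $\sum_{b\in B}f_D(b)\ge f_D(B)=f(D\cup B)-f(D)$ and (ii) curvature on the disjoint sets $B$ and $D\setminus B$, i.e.\ $f(D\cup B)\ge f(B)+(1-c)f(D\setminus B)$, the pointwise estimate reduces to
\[
\sum_{a\in C}f_{D-a}(a)\ge(1-c)f_{D\setminus C}(C),\qquad C:=D\cap A\cap B.
\]
This final inequality follows by a three-link chain: $f_{D-a}(a)\ge(1-c)f(a)$ (curvature of $f$), $\sum_{a\in C}f(a)\ge f(C)$ (subadditivity, which is a consequence of submodularity and $f(\emptyset)=0$), and $f(C)\ge f_{D\setminus C}(C)$ (submodularity once more).

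The main obstacle I expect is the case $A\cap B\neq\emptyset$. Indices $i$ with $a_i=b_i$ contribute zero to $\sum_i[g(A)-g(A-a_i+b_i)]$, yet they complicate the pointwise estimate because $D\cap B$ may be nonempty, leaving a deficit of $(1-c)f_{D\setminus C}(C)$ after the naive curvature step. The delicate point of the whole argument is that the sharper submodularity bound in the very first step recovers exactly the extra term $\sum_{a\in C}f_{D-a}(a)$ needed to absorb this deficit via the short chain of elementary inequalities above; this is what forces the constants in the definition of $g$ (and the density of $P$) to be exactly the ones dictated by Lemma~\ref{lem:m-recurrence}.
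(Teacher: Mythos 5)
Your proposal is correct and follows essentially the same route as the paper: your exact identity $\sum_{a\in A} g_{A-a}(a)+c\,\EE_{T\sim\mu_A}f(T)=\frac{ce^c}{e^c-1}f(A)$ is precisely Lemma~\ref{lem:locality-ratio:3} (your one-shot integration by parts against $h_A$ is the same computation the paper carries out coefficientwise via the recurrence of Lemma~\ref{lem:m-recurrence}), and your pointwise estimate with the extra terms $f_{D-a}(a)$ for $a\in D\cap B$ is a sound reorganization of Lemmas~\ref{lem:locality-ratio:1} and~\ref{lem:locality-ratio:2}, merely shifting the $A\cap B$ casework from the level of $g$ to the level of $f$. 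Both repackagings check out, so the argument goes through as written.
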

\begin{proof}
The proof of Theorem \ref{thm:fg-related} involves a chain of two inequalities and one equation, each of which we shall prove as a separate lemma.  We consider the quantity:
\begin{equation*}\label{eq:locality-quantity}
\sum_{i=1}^r g_{A - a_i}(a_i).
\end{equation*}
First, we shall show in Lemma \ref{lem:locality-ratio:1} that
\begin{equation*}
\sum_{i=1}^r g_{A - a_i}(a_i) \geq
  \sum_{i=1}^r [g(A) - g(A - a_i + b_i)] + \EE_{T \sim \mu_A} \sum_{i=1}^r f_{T - b_i}(b_i),
 \end{equation*}
and then in Lemma \ref{lem:locality-ratio:2} that
\begin{equation*}
\sum_{i=1}^r f_{T - b_i}(b_i) \geq f(B) - cf(T),
\end{equation*}
for any $T \subseteq A$.  Combining these inequalities, we obtain
\begin{equation}
  \label{eq:locality-eq-1}
  \sum_{i = 1}  ^rg_{A - a_i}(a_i) \ge   \sum_{i=1}^r [g(A) - g(A - a_i + b_i)]  + f(B) - c \EE_{T \sim \mu_A} f(T).
\end{equation}
Next, we show in Lemma \ref{lem:locality-ratio:3} that
\begin{equation}
\label{eq:locality-eq-2}
\sum_{i = 1} ^rg_{A - a_i}(a_i) +  c \EE_{T \sim \mu_A} f(T) = \frac{ce^c}{e^c - 1}f(A).
\end{equation}
Combining \eqref{eq:locality-eq-1} and \eqref{eq:locality-eq-2} completes the proof.
\end{proof}
We now prove each of the necessary claims.
\begin{lemma} \label{lem:locality-ratio:1}
 For all $i \in [r]$,
\[ g_{A - a_i}(a_i) \geq g(A) - g(A - a_i + b_i) + \EE_{T \sim \mu_A} \!\! f_{T - b_i}(b_i). \]
\end{lemma}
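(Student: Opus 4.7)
The plan is to rearrange the desired inequality into a form where the two sides can be directly compared via the random experiment definition of $g$. Since $g_{A - a_i}(a_i) = g(A) - g(A - a_i)$, the claim is equivalent to
\[
 g(A - a_i + b_i) - g(A - a_i) \;\geq\; \EE_{T \sim \mu_A} f_{T - b_i}(b_i).
\]
By Brualdi's lemma we have either $b_i = a_i$ or $b_i \notin A$; in either case $b_i \notin A - a_i$, so the left-hand side is exactly the marginal $g_{A - a_i}(b_i)$. Applying equation~\eqref{eq:g-marginals} then rewrites it as $\EE_{B \sim \mu_{A - a_i}} f_B(b_i)$.

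Next, I would couple the two expectations using the marginalization property stated just after the definition of $\mu_A$: if $T \sim \mu_A$ then $T \cap (A - a_i) = T - a_i$ is distributed as $\mu_{A - a_i}$. This lets me write
\[
 \EE_{B \sim \mu_{A - a_i}} f_B(b_i) \;=\; \EE_{T \sim \mu_A} f_{T - a_i}(b_i),
\]
so the lemma reduces to the pointwise bound $f_{T - a_i}(b_i) \geq f_{T - b_i}(b_i)$ for every $T \subseteq A$.

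The pointwise bound is handled by another split according to Brualdi. If $b_i = a_i$, both sides equal $f_{T - a_i}(a_i)$, so equality holds. If $b_i \neq a_i$, then $b_i \notin A$, hence $b_i \notin T$ and $T - b_i = T$; we need $f_{T - a_i}(b_i) \geq f_T(b_i)$, which is just submodularity of $f$ applied to $T - a_i \subseteq T$ with the element $b_i \notin T$. Taking expectations over $T \sim \mu_A$ then yields the lemma.

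The argument is short, and the only obstacle worth flagging is bookkeeping: one must use Brualdi's lemma twice (first to ensure $b_i \notin A - a_i$ so that \eqref{eq:g-marginals} is applicable, then to split the pointwise comparison into the trivial equality case and the submodularity case) and apply the marginalization property of the $\mu_A$ family in exactly the right direction.
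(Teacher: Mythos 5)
Your proof is correct, and it is essentially the paper's argument: both rely on Brualdi's lemma to conclude $b_i \in A \Rightarrow b_i = a_i$, the marginal formula \eqref{eq:g-marginals}, the marginalization property of $\mu_A$, and one application of submodularity. Your rearrangement just expresses the same submodularity step through the marginal of $b_i$ (reducing to $f_{T-a_i}(b_i) \geq f_{T-b_i}(b_i)$ pointwise) rather than through the marginal of $a_i$ via submodularity of $g$, which is an algebraically equivalent formulation.
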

\begin{proof}
The proof relies on the characterization of the marginals of $g$ given in \eqref{eq:g-marginals}.  We consider two cases: $b_i \notin A$ and $b_i \in A$.  If $b_i \notin A$ then the submodularity of $g$ implies
 \begin{align*}
  g_{A - a_i}(a_i) &\geq g_{A - a_i + b_i}(a_i) \\ &=
  g(A + b_i) - g(A - a_i + b_i) \\ &=
  g_A(b_i) + g(A) - g(A - a_i + b_i) \\ &=
  g(A) - g(A - a_i + b_i) + \EE_{T \sim \mu_A} \!\! f_T(b_i).
 \end{align*}

On the other hand, when $b_i \in A$, we must have $b_i = \pi(a_i) = a_i$ by the definition of $\pi$.  Then,
 \begin{align*}
g_{A - a_i}(a_i) &= \! \EE_{T \sim \mu_{A - a_i}} \!\!\!\!\!\! f_T(a_i) \\
&= \! \EE_{T \sim \mu_A} \!\! f_{T - a_i}(a_i) \\
&= \! \EE_{T \sim \mu_A} \!\!  f_{T - b_i}(b_i) \\
&= g(A) - g(A) + \EE_{T \sim \mu_A} \!\! f_{T - b_i}(b_i) \\
& = g(A) - g(A - a_i + b_i) + \EE_{T \sim \mu_A} \!\! f_{T - b_i}(b_i), 
\end{align*}
where the second equality follows from the fact that if $T \sim \mu_A$ then $T \cap (A \setminus a_i) \sim \mu_{A - a_i}$.
\end{proof}

\begin{lemma} \label{lem:locality-ratio:2}
 For any $T \subseteq A$,
\[ \sum_{i=1}^r f_{T - b_i}(b_i) \geq f(B) - cf(T). \]
\end{lemma}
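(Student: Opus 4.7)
The plan is to split the sum according to whether $b_i \in T$ or not, and bound each piece separately. Because $T \subseteq A$ and Brualdi's bijection $\pi$ is the identity on $A \cap B$, an index $i$ satisfies $b_i \in T$ if and only if $b_i \in T \cap B \subseteq A \cap B$, in which case $b_i = a_i$. Whenever $b_i \notin T$, the marginal reduces to $f_{T-b_i}(b_i) = f_T(b_i)$, and by ordinary submodularity of $f$ one has
\[ \sum_{b_i \notin T} f_T(b_i) \ge f_T(B \setminus T) = f(T \cup B) - f(T). \]

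For the complementary sum, I would invoke curvature on each term: since $b_i \notin T - b_i$, the curvature hypothesis gives $f_{T-b_i}(b_i) \ge (1-c) f(b_i)$. Summing over $b_i \in T \cap B$ and applying subadditivity (a consequence of submodularity with $f(\emptyset)=0$),
\[ \sum_{b_i \in T} f_{T-b_i}(b_i) \ge (1-c) \sum_{b_i \in T \cap B} f(b_i) \ge (1-c) f(T \cap B). \]
Adding the two lower bounds yields the intermediate estimate
\[ \sum_i f_{T-b_i}(b_i) \ge f(T \cup B) - f(T) + (1-c) f(T \cap B). \]

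It remains to show this is at least $f(B) - cf(T)$, which after rearrangement is equivalent to $f(T \cup B) - f(B) \ge (1-c)\bigl[f(T) - f(T \cap B)\bigr]$. The left-hand side equals the marginal $f_B(T \setminus B)$, which is at least $(1-c) f(T \setminus B)$ by curvature applied to the disjoint pair $(B, T \setminus B)$. Subadditivity applied to the disjoint decomposition $T = (T \cap B) \sqcup (T \setminus B)$ gives $f(T) \le f(T \cap B) + f(T \setminus B)$, so $f(T) - f(T \cap B) \le f(T \setminus B)$. Chaining the two inequalities completes the argument.

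The main difficulty is finding the right way to combine submodularity and curvature, since neither tool alone suffices. A naive application---using only submodularity to bound the $b_i \notin T$ terms and curvature elementwise---would yield $\sum_i f_{T-b_i}(b_i) \ge (1-c) f(B)$, which is strictly weaker than $f(B) - cf(T)$ whenever $f(T) < f(B)$. Moreover, the natural downward telescoping for the ``$b_i \in T$'' sum $\sum_{b_i \in T \cap B} [f(T) - f(T - b_i)]$ gives only an \emph{upper} bound of $f(T) - f(T \setminus B)$, which cannot be used to supply the needed deficit. The insight that makes the proof go through is to invoke curvature twice in distinct roles: elementwise to produce the $(1-c) f(T \cap B)$ term, and once more on the disjoint pair $(B, T \setminus B)$ to absorb the remaining gap between $f(T \cup B)$ and $f(B)$.
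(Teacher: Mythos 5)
Your proof is correct and is essentially the paper's own argument: the same split of the sum according to whether $b_i \in T \cap B$, elementwise curvature plus subadditivity giving the $(1-c)f(T\cap B)$ term, curvature on the disjoint pair $(B, T\setminus B)$, and subadditivity on $T = (T\cap B)\sqcup(T\setminus B)$. The only difference is cosmetic bookkeeping --- you defer the curvature step on $(B,T\setminus B)$ to a final rearrangement, whereas the paper applies it immediately after bounding $f_T(B\setminus T)$.
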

\begin{proof}
Our proof relies only on the submodularity and curvature of $f$.
 Let $X = T \cap B$, $T' = T \setminus X$ and $B' = B \setminus X$. Furthermore, let $I(X) \subseteq [r]$ be the set of indices $i$ such that $b_i \in X$. We separate the sum on the left-hand side into two parts, based on whether or not $i \in I(x)$. 

The first part of the sum is
 \[
  \sum_{i \notin I(X)} f_{T-b_i}(b_i) =
  \sum_{i \notin I(X)} f_T(b_i) \geq f_T(B'),
 \]
where the final inequality follows from submodularity of $f$.  Next, using $T \cup B' = B \cup T = B \cup T'$, we get
 \begin{align*}
  f_T(B') &= f(T \cup B') - f(T) \\ &= f(B \cup T') - f(T) \\ &\geq f(B) + (1-c) f(T') - f(T),
 \end{align*}
where the final line follows from the fact that $f$ has curvature at most $c$ and $B \cap T' = \emptyset$. 

The second part of the sum is
 \[
  \sum_{i \in I(X)} f_{T-b_i}(b_i) \geq \sum_{i \in I(X)} (1-c) f(b_i) \geq (1-c) f(X),
 \]
where the first inequality follows from the fact that $f$ has curvature at most $c$ and the second inequality from submodularity of $f$.

 Putting both parts together, we deduce
 \begin{align*}
  \sum_{i=1}^r f_{T - b_i}(b_i) &\geq f(B) + (1-c) f(T') - f(T) + (1-c)f(X) \\ 
&\geq  f(B) + (1-c) f(T) - f(T) \\ &= f(B) - cf(T),
 \end{align*}
where in the second inequality we have used $f(T' \cup X) = f(T)$ and $f(T' \cap X) = f(\emptyset) = 0$ together with submodularity of $f$.
\end{proof}

\begin{lemma} \label{lem:locality-ratio:3}
\begin{equation} \label{eq:locality-ratio:3}
 \sum_{i=1}^r g_{A - a_i}(a_i) + c\EE_{T \sim \mu_A} f(T) =
 \frac{ce^c}{e^c-1} f(A).
\end{equation}
\end{lemma}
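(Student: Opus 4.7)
The plan is to reduce the identity to a one-dimensional calculus statement about the sampling density $P$. Let $F \colon [0,1]^r \to \RR$ be the multilinear extension of $f$ on coordinates indexed by $A$, and set $\phi(p) = F(p,\ldots,p) = \EE[f(T_p)]$, where $T_p \subseteq A$ includes each element independently with probability $p$. By the definition of $\mu_A$,
\[
\EE_{T \sim \mu_A} f(T) = \EE_{p \sim P}[\phi(p)].
\]

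First I would establish the companion identity $\sum_{i=1}^r g_{A - a_i}(a_i) = \EE_{p \sim P}[\phi'(p)]$. Since $F$ is affine in each coordinate, differentiating and evaluating at the diagonal gives $\partial F/\partial x_i\,(p,\ldots,p) = \EE[f_{B_p^{(i)}}(a_i)]$, where $B_p^{(i)} \subseteq A - a_i$ is $p$-biased. Summing over $i$ yields $\phi'(p)$. Then averaging over $p \sim P$ and invoking \eqref{eq:g-marginals} (using that the restriction of a $p$-biased sample of $A$ to $A - a_i$ is a $p$-biased sample of $A - a_i$, so its overall law when $p \sim P$ is exactly $\mu_{A - a_i}$) gives the identity.

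Given these two representations, the lemma reduces to
\[
\EE_{p \sim P}[\phi'(p)] + c\, \EE_{p \sim P}[\phi(p)] = \frac{c e^c}{e^c - 1}\, f(A).
\]
I would prove this by integration by parts on the first term, exploiting that the density of $P$ is $c e^{cp}/(e^c - 1)$, which is (up to the constant factor $c$) its own derivative. The boundary terms evaluate to $\frac{ce^c}{e^c - 1}\phi(1) - \frac{c}{e^c - 1}\phi(0)$ and the remaining integral exactly cancels $c\, \EE_{p \sim P}[\phi(p)]$. Using $\phi(1) = f(A)$ and $\phi(0) = f(\emptyset) = 0$ (by normalization) finishes the proof.

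The main obstacle is cleanly justifying the derivative identity $\phi'(p) = \sum_i \EE[f_{B_p^{(i)}}(a_i)]$; once this is in hand the integration by parts is one line. An alternative purely combinatorial route is available: expand both sides in the basis $\{f(C) : C \subseteq A\}$, and check that the coefficient of $f(C)$ on the left-hand side equals $|C|\,\mcoeff{r-1}{|C|-1} - (r-|C|)\,\mcoeff{r-1}{|C|}$, which Lemma~\ref{lem:m-recurrence} rewrites as $-c\,\mcoeff{r}{|C|}$ for $0 < |C| < r$, with the $|C| = r$ boundary term contributing exactly $\frac{c e^c}{e^c - 1} f(A)$ and the $|C| = 0$ boundary term vanishing since $f(\emptyset) = 0$.
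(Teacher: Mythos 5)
Your proposal is correct, and your primary route is genuinely different from the paper's. The paper proves the identity purely combinatorially: it expands $\sum_i g_{A-a_i}(a_i) + c\EE_{T\sim\mu_A}f(T)$ in the basis $\{f(D): D\subseteq A\}$, observes that the coefficient of $f(D)$ is $|D|\mcoeff{r-1}{|D|-1}-(r-|D|)\mcoeff{r-1}{|D|}+c\mcoeff{r}{|D|}$, and invokes the recurrence of Lemma~\ref{lem:m-recurrence} to show this vanishes except at $D=\emptyset$ (killed by normalization) and $D=A$ (giving $\tfrac{ce^c}{e^c-1}f(A)$); in other words, your ``alternative purely combinatorial route'' is exactly the paper's proof, up to the small phrasing slip that the displayed expression $|C|\mcoeff{r-1}{|C|-1}-(r-|C|)\mcoeff{r-1}{|C|}$ is the coefficient in the sum of marginals only, before adding the $c\mcoeff{r}{|C|}$ contribution. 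Your main route instead works with $\phi(p)=\EE[f(T_p)]$, the diagonal restriction of the multilinear extension: the identities $\EE_{T\sim\mu_A}f(T)=\EE_{p\sim P}\phi(p)$ and $\sum_i g_{A-a_i}(a_i)=\EE_{p\sim P}\phi'(p)$ (via $\partial_i F(p,\ldots,p)=\EE f_{B_p^{(i)}}(a_i)$, the chain rule, and \eqref{eq:g-marginals}) reduce the lemma to a single integration by parts against the density $ce^{cp}/(e^c-1)$, with boundary terms $\tfrac{ce^c}{e^c-1}\phi(1)-\tfrac{c}{e^c-1}\phi(0)$ and the bulk term cancelling $c\EE\phi$. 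This is sound ($\phi$ is a polynomial, so the derivative identity and the integration by parts are immediate) and arguably more conceptual: it performs one global integration by parts at the level of $\phi$, whereas the paper performs it coefficientwise inside Lemma~\ref{lem:m-recurrence}, and it makes transparent why the exponential density is the right choice (differentiation reproduces it up to the boundary contributions that produce $\tfrac{ce^c}{e^c-1}f(A)$). What the paper's coefficient-level route buys in exchange is that the recurrence for $\mcoeff{a}{b}$ is isolated as a reusable ingredient, which is what the appendix exploits when replacing $e^c$ by other values of $E$.
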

\begin{proof}
The proof relies primarily on the recurrence given in Lemma \ref{lem:m-recurrence} for the values $\mcoeff{a}{b}$ used to define $g$.  From the characterization of the marginals of $g$ given in \eqref{eq:g-marginals} we have
 \[
  g_{A - a_i}(a_i) = \EE_{T \sim \mu_{A - a_i}} [f_T(a_i)] = \EE_{T \sim \mu_{A - a_i}} [f(T + a_i) - f(T)].
 \]
Each subset $D \subseteq A$ appears in the expectation.  Specifically, if $a_i \in D$ then we have the term $\mu_{A - a_i}(D - a_i)f(D)$, and if $a_i \in A \setminus D$ then we have the term $-\mu_{A - a_i}(D)f(D)$.  Therefore the coefficient of $f(D)$ in the left-hand side of~\eqref{eq:locality-ratio:3} is thus given by
\[
\left(\sum_{a_i \in D} \mu_{A - a_i}(D - a_i)\right) - \Biggl(\sum_{a_i \notin D} \mu_{A - a_i}(D)\Biggr) + c\mu_A(D)  =
|D|\mcoeff{r-1}{|D| - 1} - (r-|D|)\mcoeff{r-1}{|D|} + c\mcoeff{r}{|D|}.
 \]
According to the recurrence for $\mcoeff{}{}$ given in Lemma~\ref{lem:m-recurrence}, the right-hand side vanishes unless $D = \emptyset$, in which case it is $\frac{-c}{e^c - 1}f(\emptyset) = 0$, or $D = A$, in which case it is $\frac{ce^c}{e^c-1}f(A)$.
\end{proof}

We are now ready to prove this section's main claim, which gives bounds on both the approximation ratio and complexity of Algorithm 1.

\begin{theorem}
\label{thm:alg-1}
Algorithm \ref{alg:1} is a $\left(\frac{1 - e^{-c}}{c} - \epsilon\right)$-approximation algorithm,
requiring at most $O(r^2n\epsilon^{-1}\log n)$ evaluations of $g$.
\end{theorem}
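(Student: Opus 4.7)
\medskip

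\noindent\textbf{Proof proposal.}
The plan is to combine Theorem~\ref{thm:fg-related} applied to $A = S$ and $B = O$ with local optimality of $S$ under $g$, and to control the blow-up of $g$ relative to $f$ using Lemma~\ref{lem:g-magnitude}. Without loss of generality $S$ and $O$ are bases; index $O = \{o_1,\ldots,o_r\}$ so that $o_i = \pi(s_i)$ for the bijection $\pi$ supplied by Brualdi's lemma, which guarantees that each $S - s_i + o_i$ is a base and hence independent. Since the algorithm terminated, every such swap failed, i.e.\ $g(S - s_i + o_i) \leq (1+\epsilon_1) g(S)$, so $g(S) - g(S - s_i + o_i) \geq -\epsilon_1 g(S)$ for every $i$.

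Summing these $r$ inequalities and substituting into Theorem~\ref{thm:fg-related} gives
\[
\frac{ce^c}{e^c-1} f(S) \;\geq\; f(O) \;-\; r\epsilon_1 g(S).
\]
Now Lemma~\ref{lem:g-magnitude} and $\epsilon_1 = \epsilon/(rH_r)$ yield $r\epsilon_1 g(S) \leq \epsilon \cdot \frac{ce^c}{e^c-1} f(S)$, so after rearranging
\[
(1+\epsilon)\,\frac{ce^c}{e^c-1}\, f(S) \;\geq\; f(O).
\]
Since $(1-e^{-c})/c \leq 1$, this implies $f(S) \geq \bigl(\frac{1-e^{-c}}{c} - \epsilon\bigr) f(O)$, which is the claimed approximation ratio.

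For the complexity bound, the greedy initialization uses $O(rn)$ evaluations of $g$. For the local search phase, note that each successful exchange multiplies $g(S)$ by at least $1+\epsilon_1$, so after $T$ exchanges $g(S) \geq (1+\epsilon_1)^T g(\sinit)$. Because the greedy algorithm run on the monotone submodular $g$ (Lemma~\ref{lem:g-properties}) is a $1/2$-approximation with respect to $g$, and because $g(O) \geq f(O) = f^*$ by Lemma~\ref{lem:g-magnitude}, we get $g(\sinit) \geq \tfrac{1}{2} f^*$. On the other hand $g(S) \leq \frac{ce^c}{e^c-1} H_r f(S) \leq \frac{ce^c}{e^c-1} H_r f^*$. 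Thus $(1+\epsilon_1)^T = O(H_r)$, i.e.\ $T = O(\epsilon_1^{-1} \log H_r) = O(r H_r \epsilon^{-1} \log H_r)$. Each iteration scans the $O(rn)$ candidate swaps $(e,x)$, so the total number of evaluations of $g$ is $O(r^2 n \epsilon^{-1} \log n)$ after absorbing lower-order $\log\log$ factors into $\log n$.

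The only nontrivial step is the first one, and its content has already been pushed into Theorem~\ref{thm:fg-related}; once that is in hand, the rest is bookkeeping around the telescoping $\sum_i [g(S)-g(S-s_i+o_i)]$, the harmonic factor from Lemma~\ref{lem:g-magnitude}, and the choice $\epsilon_1 = \epsilon/(rH_r)$, which was explicitly designed to make these two factors cancel.
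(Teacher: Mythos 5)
Your approximation-ratio argument is correct and is essentially the paper's proof: Theorem~\ref{thm:fg-related} applied to $S,O$, the $(1+\epsilon_1)$-approximate local optimality of each swap $S-s_i+o_i$ (valid by Brualdi's lemma, including the trivial case $o_i=s_i$), and Lemma~\ref{lem:g-magnitude} with $\epsilon_1=\epsilon/(rH_r)$. Your final algebra differs cosmetically: the paper bounds $f(S)\le f(O)$ to get $\frac{ce^c}{e^c-1}f(S)\ge f(O)-\frac{ce^c}{e^c-1}\epsilon f(O)$ directly, while you keep the error term on the $f(S)$ side and use $\frac{1-e^{-c}}{c}\le 1$ to pass from $(1+\epsilon)\frac{ce^c}{e^c-1}f(S)\ge f(O)$ to the stated ratio; both are fine.

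The gap is in the runtime bound. You lower-bound $g(\sinit)$ against $f^*$ (via $g(\sinit)\ge\tfrac12 g^*\ge\tfrac12 g(O)\ge\tfrac12 f^*$) and upper-bound $g(S)\le\frac{ce^c}{e^c-1}H_r f^*$, so the ratio you control is $O(H_r)$ rather than $O(1)$, giving $T=O(\epsilon_1^{-1}\log H_r)=O(r\epsilon^{-1}\log r\,\log\log r)$ iterations and hence $O(r^2n\epsilon^{-1}\log r\,\log\log r)$ evaluations. The claim that the $\log\log$ factor can be ``absorbed into $\log n$'' is not valid: for $r=\Theta(n)$ (or any $r$ polynomial in $n$) one has $\log r\,\log\log r=\omega(\log n)$, so your derivation does not reach the stated $O(r^2n\epsilon^{-1}\log n)$ bound. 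The fix is to compare $g(\sinit)$ to the optimum of $g$ itself rather than to $f^*$: since the greedy phase is run on $g$, which is monotone submodular by Lemma~\ref{lem:g-properties}, Fisher--Nemhauser--Wolsey gives $g(\sinit)\ge\tfrac12 g^*$ where $g^*=\max_{A\in\II}g(A)$, and every value the local search ever attains is at most $g^*$; hence the number of improvements is at most $\log_{1+\epsilon_1}2=O(\epsilon_1^{-1})=O(rH_r\epsilon^{-1})=O(r\epsilon^{-1}\log n)$, with no $\log\log$ loss, and multiplying by the $O(rn)$ evaluations per iteration (plus the greedy phase) gives the stated $O(r^2n\epsilon^{-1}\log n)$. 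This is exactly the paper's route, and it avoids invoking Lemma~\ref{lem:g-magnitude} in the runtime analysis altogether.
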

\begin{proof}
We first consider the number of evaluations of $g$ required by Algorithm \ref{alg:1}.
The initial greedy phase requires $O(rn)$ evaluations of $g$, as does each iteration of the local search phase.  Thus, the total number of evaluations of $g$ required by Algorithm \ref{alg:1} is $O(rnI)$, where $I$ is the number of improvements applied in the local search phase.  We now derive an upper bound on $I$.

Let $g^* = \max_{A \in \II}g(A)$ be the maximum value attained by $g$ on any independent set in $\mm$.
Algorithm \ref{alg:1} begins by setting $S$ to a greedy solution $\sinit$, and each time it selects an improved solution $S'$ to replace $S$ by, we must have
\[
g(S') > (1 + \epsilon_1)g(S)
\]
Thus, the number of improvements that Algorithm \ref{alg:1} can apply is at most
\[
\log_{1 + \epsilon_1}\frac{g^*}{g(\sinit)}.
\]

Fisher, Nemhauser, and Wolsey \cite{Fisher-1978} show that the greedy algorithm is a $1/2$-approximation algorithm for maximizing any monotone submodular function subject to a matroid constraint.  In particular, because $g$ is monotone submodular, as shown in Lemma \ref{lem:g-properties}, we must have
\[
I \le \log_{1 + \epsilon_1}\frac{g^*}{g(\sinit)} \le \log_{1 + \epsilon_1}2 = O(\epsilon_1^{-1}) = O(rH_r\epsilon^{-1}) = O(r\epsilon^{-1}\log n).
\]

Next, we consider the approximation ratio of Algorithm \ref{alg:1}.  
Recall that $O$ is an optimal solution of the arbitrary instance $(\mm=(\UU,\II),f)$ on which Algorithm \ref{alg:1} returns the solution $S$.  We apply Theorem \ref{thm:fg-related} to the bases $S$ and $O$, indexing $S$ and $O$ as in the theorem so that $S - s_i + o_i \in \II$ for all $i \in [r]$, to obtain:
\begin{equation}
\label{eq:fg-related-1}
  \frac{ce^c}{e^c-1} f(S) \geq f(O) + \sum_{i=1}^r [g(S) - g(S - s_i + o_i).]
\end{equation}
Then, we note that we must have
\[g(S - s_i + o_i) \le (1 + \epsilon_1)g(S)\]
for each value $i \in [r]$---otherwise, Algorithm \ref{alg:1} would have exchanged $s_i$ for $o_i$ rather than returning $S$.  Summing the resulting $r$ inequalities gives
\[\sum_{i = 1}^r[g(S) - g(S - s_i + o_i)] \ge -r\epsilon_1g(S).\]
Applying this and upper bound on $g(S)$ from Lemma \ref{lem:g-magnitude} to~\eqref{eq:fg-related-1} we then obtain
\[
\frac{ce^c}{e^c-1} f(S) \ge f(O) - r\epsilon_1 g(S) \ge f(O) - \frac{ce^c}{e^c - 1}r\epsilon_1H_rf(S)
\ge f(O) - \frac{ce^c}{e^c - 1}r\epsilon_1H_r f(O).
\]
Rewriting this inequality using the definition $\epsilon_1 = \frac{\epsilon}{rH_r}$ then gives
\[
f(S) \ge \left(\frac{1 - e^{-c}}{c} - \epsilon\right)f(O),
\]
and so Algorithm \ref{alg:1} is a $\left(\frac{1 - e^{-c}}{c} - \epsilon\right)$-approximation algorithm.
\end{proof}

\section{A randomized, polynomial-time algorithm}
\label{sec:rp}
Our analysis of Algorithm \ref{alg:1} supposed that we were given an oracle for computing the value of the potential function $g$.  We now use the results of Lemma \ref{lem:g-sampling}, which shows that the value $g(A)$ can be approximated for any $A$ by using a polynomial number of samples, to implement a randomized, polynomial-time approximation algorithm that does not require an oracle for $g$.  The resulting algorithm attains the same approximation ratio as Algorithm \ref{alg:1} with high probability.

The modified algorithm is shown in Algorithm \ref{alg:2}.  Algorithm \ref{alg:2} uses an approximation $\gapp$ of $g$ that is obtained by taking $N$ independent random samples of $f$ each time $\gapp$ is calculated.  The number of samples $N$ depends on the parameters $\epsilon$ and $\alpha$, in addition to the rank $r$ of $\mm$ the size $n$ of $\UU$.  As in Algorithm \ref{alg:1}, $\epsilon$ governs how much an exchange must improve the current solution before it is applied, and so affects both the approximation performance and runtime of the algorithm.  The additional parameter $\alpha$ controls the probability that Algorithm \ref{alg:2} fails to produce a $\left(\frac{1 - e^{-c}}{c} - \epsilon\right)$-approximate solution.  Specifically, we show that Algorithm \ref{alg:2} fails with probability at most $O(n^{-\alpha})$.  

For the analysis, we assume that $\epsilon \leq 1$ and $r \geq 2$, which imply that $\epsilon_2 \leq 1/12$.

\begin{algorithm}
\label{alg:2}
\KwIn{$\mm = (\UU,\II),\ f,\ c,\ \epsilon,\ \alpha$}
Set $\epsilon_2 = \frac{\epsilon}{4rH_r}$\;
Set $I = \left(\left(\frac{1 + \epsilon_2}{1 - \epsilon_2}\right)(2 + 3r\epsilon_2) - 1\right)\epsilon_2^{-1}$\;
Set $N = \frac{1}{2}\left(\frac{ce^c}{e^c - 1} \cdot \frac{H_n}{\epsilon_2}\right)^2 \ln\left((I + 1)rn^{1 + \alpha}\right)$\;
Let $\gapp$ be an approximation to $g$ computed by taking $N$ random samples\;
Let $\sinit$ be the result of running the standard greedy algorithm on $(\mm, \gapp)$\;
$S \gets \sinit$\;
$v \gets \gapp(\sinit)$\;
\For{$i \gets 1$ \KwTo $I$}{
  $\mathrm{done} \gets \mathbf{true}$\;
  \ForEach{element $e \in S$ and $x \in \UU \setminus S$}{
    $S' \gets S - e + x$\;
    \If{$S' \in \II$}{
      $v' \gets \gapp(S')$\;
      \If{$v' > (1 + \epsilon_2)v$}{
        $v \gets v'$ and $S \gets S'$\;
        $\mathrm{done} \gets \mathbf{false}$\;
        \Break;
      }
    }
  }
  \lIf{$\mathrm{done}$}{\Return $S$}
}
\Return \texttt{Error}\;
\caption{The non-oblivious local search algorithm}
\end{algorithm}

The local search routine in Algorithm \ref{alg:2} runs some number $I$ of iterations, signaling an error if it fails to converge to a local optimum after this many improvements.  In each iteration, the algorithm searches through all possible solutions $S' = S - e + x$, sampling the value $\gapp(S')$ if $S' \in \II$.  If the sampled value of $\gapp(S')$ exceeds the sampled value for $\gapp(S)$ by a factor of at least $(1 + \epsilon_2)$, the algorithm updates $S$ and moves to the next iteration.  Otherwise, it returns the current solution.  Note that we store the last sampled value $\gapp(S)$ of the current solution in $v$, rather than resampling $\gapp(S)$ each time we check an improvement $S'$.

The analysis of Algorithm \ref{alg:2} follows the same general pattern as that presented in the previous section.  Here however, we must address the fact that $\gapp$ does not always agree with $g$.  First, we estimate the probability that all of the computations of $\gapp$ made by Algorithm \ref{alg:2} are reasonably close to the value of $g$.

\begin{lemma}
\label{lem:sampling-assumption}
With probability $1 - O(n^{-\alpha})$, we have $|\gapp(A) - g(A)| \le \epsilon_2g(A)$
for all sets $A$ for which Algorithm \ref{alg:2} computes $\gapp(A)$.
\end{lemma}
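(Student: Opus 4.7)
The plan is to apply Lemma \ref{lem:g-sampling} to each individual call of $\gapp$ and then union bound over all such calls made during a run of Algorithm \ref{alg:2}.

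First I would tally the number of times $\gapp$ is evaluated. The initial greedy phase on $(\mm, \gapp)$ makes at most $rn$ calls, since the standard greedy algorithm tries each of at most $n$ candidate elements at each of its $r$ steps. The line $v \gets \gapp(\sinit)$ contributes at most one additional call. The main \textbf{for} loop runs for at most $I$ iterations, and inside each iteration the algorithm enumerates all $r(n-r)$ pairs $(e, x)$ but only invokes $\gapp(S')$ on those $S' = S-e+x$ lying in $\II$; this triggers at most $rn$ calls per iteration. Summing these three contributions shows that $\gapp$ is computed on at most $(I+1)rn$ sets over the whole execution.

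Next I would apply Lemma \ref{lem:g-sampling} with $\epsilon = \epsilon_2$ and $M = (I+1)rn^{1+\alpha}$; this matches exactly the value of $N$ hard-coded in Algorithm \ref{alg:2}, so the hypothesis of that lemma is satisfied for each individual call. Consequently, for any fixed set $A$, the call returns an estimate with $|\gapp(A) - g(A)| \geq \epsilon_2 g(A)$ with probability $O(M^{-1}) = O\bigl(((I+1)rn^{1+\alpha})^{-1}\bigr)$. The set $A$ evaluated at each call is itself random (it depends on the outcomes of earlier samples), but every call draws a fresh independent block of $N$ samples from $\nu_A$, so the failure probability bound holds conditionally on the entire prior history. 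A union bound over the at most $(I+1)rn$ calls yields total failure probability at most $(I+1)rn \cdot O\bigl(((I+1)rn^{1+\alpha})^{-1}\bigr) = O(n^{-\alpha})$, as required.

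The only genuine task here is bookkeeping: correctly counting the calls to $\gapp$ from the pseudocode and verifying that the $\ln\bigl((I+1)rn^{1+\alpha}\bigr)$ factor hard-coded in $N$ matches this count with the extra factor of $n^\alpha$ needed to absorb the union bound. There is no analytic obstacle beyond Lemma \ref{lem:g-sampling}, since the fresh-randomness-at-each-call structure of Algorithm \ref{alg:2} makes the conditional failure probabilities independent of the algorithm's trajectory.
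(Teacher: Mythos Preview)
Your proposal is correct and follows essentially the same approach as the paper's proof: count the at most $(I+1)rn$ calls to $\gapp$, apply Lemma~\ref{lem:g-sampling} with $M=(I+1)rn^{1+\alpha}$ to each, and union bound. Your explicit remark that the per-call failure bound holds conditionally on the algorithm's history (because fresh samples are drawn at each call) is a welcome clarification that the paper leaves implicit.
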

\begin{proof}
We first bound the total number of sets $A$ for which Algorithm \ref{alg:2} computes $\gapp(A)$.  The initial greedy phase requires fewer than $rn$ evaluations, as does each of the $I$ iterations of the local phase.  The total number of evaluations is therefore less than $(I + 1)rn$.

Algorithm \ref{alg:2} uses 
\[N = \frac{1}{2}\left(\frac{ce^c}{e^c - 1} \cdot \frac{H_n}{\epsilon_2}\right)^2 \ln\left((I + 1)rn^{1 + \alpha}\right)
\]
samples for every computation of $\gapp(A)$.  By Lemma \ref{lem:g-sampling}, the probability that we have $|g(A) - \gapp(A)| \ge \epsilon_2 g(A)$ for any given set $A$ is then $O\left(\frac{1}{(I + 1)rn^{1 + \alpha}}\right)$.  From the union bound, then, the probability that at least one of the $(I + 1)rn$ sets $A$ for which Algorithm \ref{alg:2} computes $\gapp(A)$ does not satisfy the desired error bound is at most $O\left(\frac{(I + 1)rn}{(I + 1)rn^{1 + \alpha}}\right) = O(n^{-\alpha})$.
\end{proof}

We call the condition that $|g(A) - \gapp(A)| \le \epsilon_2 g(A)$ for all sets $A$ considered by Algorithm \ref{alg:2} the \emph{sampling assumption}.  Lemma \ref{lem:sampling-assumption} shows that the sampling assumption holds with high probability.  

Now, we must adapt the analysis of Section \ref{sec:algorithm-1}, which holds when $g$ is computed exactly, to the setting in which $g$ is computed approximately.  In Theorem \ref{thm:alg-1}, we showed that $g(\sinit)$ is within a constant factor of the largest possible value that $g$ could take on any set $A \subseteq \UU$.  Then, because the algorithm always improved $g$ by a factor of at least $(1 + \epsilon_1)$, we could bound the number of local search iterations that it performed.  Finally, we applied Theorem \ref{thm:fg-related} to translate the local optimality of $S$ with respect to $g$ into a lower bound on $f(S)$.

Here we follow the same general approach.  First, we derive the following result, which shows that the initial value $\gapp(\sinit)$ is within a constant factor of the maximum value $\gapp^*$ of $\gapp(A)$ on any set $A$ considered by Algorithm \ref{alg:2}.\footnote{A similar result for the greedy algorithm applied to an approximately calculated submodular function is given by Calinescu et al.~\cite{Calinescu-2011}.  However, in their model, the \emph{marginals} of a submodular function are approximately calculated, while in ours, the \emph{value} of the submodular function is approximately calculated.  For the sake of completeness, we provide a complete proof for our setting.}
\begin{lemma} \label{lem:greedy-phase-2}
Suppose that the sampling assumption is true, and let $\gapp^*$ be the maximum value of $\gapp(A)$ over all sets $A$ considered by Algorithm \ref{alg:2}.  Then, 
 \[
(2 + 3r\epsilon_2)\left(\frac{1 + \epsilon_2}{1 - \epsilon_2}\right)\gapp(\sinit) \geq \gapp^*.
 \]
\end{lemma}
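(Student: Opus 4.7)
The plan is to route the comparison through the exact function $g$: use the sampling assumption to convert $\gapp^*$ and $\gapp(\sinit)$ into bounds in terms of $g$, and then prove the corresponding statement about $g(\sinit)$ versus the maximum of $g$ on independent sets by a greedy analysis made robust to multiplicative noise. To set up, let $A^*$ be any set with $\gapp(A^*) = \gapp^*$. Every set for which Algorithm \ref{alg:2} evaluates $\gapp$ is independent in $\mm$, so by the matroid augmentation axiom we may extend $A^*$ to a basis $A = \{a_1,\ldots,a_r\}$; by monotonicity of $g$ (Lemma \ref{lem:g-properties}), $g(A) \geq g(A^*)$. Writing $\sinit = \{g_1,\ldots,g_r\}$ in the order in which greedy added the elements and $S_i = \{g_1,\ldots,g_i\}$, Brualdi's lemma applied to the bases $\sinit$ and $A$ yields a bijection $\pi$ with $\sinit - g_i + a_{\pi(i)}$ a basis and $\pi$ the identity on $\sinit \cap A$; in particular $a_{\pi(i)} \notin S_{i-1}$ and $S_{i-1} + a_{\pi(i)} \in \II$ for every $i$, so $a_{\pi(i)}$ is a legitimate alternative at step $i$ of greedy.

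Because greedy maximizes $\gapp(S_{i-1} + x)$ over feasible $x$, applying the sampling assumption to the two relevant sets gives
\[
  (1+\epsilon_2)\, g(S_i) \;\geq\; \gapp(S_{i-1}+g_i) \;\geq\; \gapp(S_{i-1}+a_{\pi(i)}) \;\geq\; (1-\epsilon_2)\, g(S_{i-1}+a_{\pi(i)}),
\]
so $g(S_{i-1}+a_{\pi(i)}) \leq \alpha^{-1} g(S_i)$ with $\alpha = (1-\epsilon_2)/(1+\epsilon_2)$. The standard chain via submodularity and monotonicity of $g$ gives $g(A) \leq g(A \cup \sinit) \leq g(\sinit) + \sum_{i=1}^r g_{S_{i-1}}(a_{\pi(i)})$; substituting the previous display, telescoping $\sum_i [g(S_i)/\alpha - g(S_{i-1})]$, using $g(S_0) = 0$ (normalization of $g$), and bounding $\sum_{i=1}^{r-1} g(S_i) \leq (r-1)\, g(\sinit)$, a short algebraic simplification after plugging in $\alpha$ produces
\[
  g(A) \;\leq\; g(\sinit) \cdot \frac{2 + 2(r-1)\epsilon_2}{1 - \epsilon_2}.
\]

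To finish, apply the sampling assumption once more at both ends:
\[
  \gapp(A^*) \;\leq\; (1+\epsilon_2)\, g(A) \;\leq\; \frac{(1+\epsilon_2)(2+2(r-1)\epsilon_2)}{(1-\epsilon_2)^2}\,\gapp(\sinit),
\]
after which it remains to verify $\frac{(1+\epsilon_2)(2+2(r-1)\epsilon_2)}{(1-\epsilon_2)^2} \leq (2+3r\epsilon_2) \cdot \frac{1+\epsilon_2}{1-\epsilon_2}$. Dividing out the common factor $\frac{1+\epsilon_2}{1-\epsilon_2}$ reduces this to $2r\epsilon_2 \leq 3r\epsilon_2(1-\epsilon_2)$, i.e.\ $3\epsilon_2 \leq 1$, which is amply satisfied since $\epsilon_2 \leq 1/12$ under the hypotheses $\epsilon \leq 1$ and $r \geq 2$. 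I expect the main obstacle to be tracking the $(1 \pm \epsilon_2)$ factors tightly enough that the final constant is the claimed $2 + 3r\epsilon_2$ rather than something slightly worse; in particular, using $(r-1)$ rather than $r$ copies of $g(\sinit)$ when bounding $\sum_{i=1}^{r-1} g(S_i)$ is what saves the $r = 2$ edge case.
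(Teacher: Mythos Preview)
Your proof is correct and follows essentially the same approach as the paper: both route through the exact function $g$, use Brualdi's lemma to produce a feasible alternative $a_{\pi(i)}$ at each greedy step, convert the greedy choice $\gapp(S_{i-1}+g_i)\ge\gapp(S_{i-1}+a_{\pi(i)})$ into a $g$-inequality via the sampling assumption, and then sum using submodularity and monotonicity of $g$. The only cosmetic differences are that the paper linearizes $(1+\epsilon_2)/(1-\epsilon_2)\le 1+3\epsilon_2$ at the outset (so the constant $2+3r\epsilon_2$ appears directly after summing), whereas you carry the exact ratio through the telescoping and verify the final numerical inequality at the end; your use of $(r-1)$ rather than $r$ copies of $g(\sinit)$ yields a slightly tighter intermediate bound but the same stated conclusion.
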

\begin{proof}
The standard greedy algorithm successively chooses a sequence of sets $\emptyset = S_0,S_1,\ldots,S_r = \sinit$, where each $S_i$ for $i > 0$ satisfies $S_i = S_{i - 1} + s_{i}$ for some element $s_i \in \UU \setminus S_{i-1}$.  The element $s_i$ is chosen at each phase according to the formula
\[ s_i = \argmax_{\substack{x\,\in\,\UU \setminus S_{i-1} \\ \text{s.t.}\,S_{i - 1} + x\,\in\,\II}} \gapp(S_{i-1} + x). \]

Let $O$ be any base of $\mm$ on which $g$ attains its maximum value.  According to Brualdi's lemma, we can index $O = \{o_1,\ldots,o_r\}$ so that $o_i = \pi(s_i)$ for all $i \in [r]$.  Then, the set $S_{i - 1} + o_i$ is independent for all $i \in [r]$.  Thus, we must have
\[
\gapp(S_{i-1} + s_i) \ge \gapp(S_{i-1} + o_i)
\]
for all $i \in [r]$.  In order to use monotonicity and submodularity, we translate this into an inequality for $g$.  From the sampling assumption, we have
\[
(1 + \epsilon_2)g(S_{i-1} + s_i) \ge \gapp(S_{i-1} + s_i) \ge \gapp(S_{i-1} + o_i) \ge (1-\epsilon_2)g(S_{i-1} + o_i).
\]
Then, since $(1+\epsilon_2)/(1-\epsilon_2) \leq 1+3\epsilon_2$ for all $\epsilon_2 \leq 1/3$,
\[
(1 + 3\epsilon_2)g(S_{i-1} + s_i) \ge \frac{(1+\epsilon_2)}{(1-\epsilon_2)}g(S_{i-1} + s_i)  \ge g(S_{i-1} + o_i).
\]
Subtracting $g(S_{i-1})$ from each side above, we obtain
\[3\epsilon_2 g(S_i) + g_{S_{i-1}}(s_i) \ge g_{S_{i-1}}(o_i) \]
for each $i \in [r]$.  Summing the resulting $r$ inequalities, we obtain a telescoping summation, which gives
\begin{equation*}
\label{eq:greedy-1}
3\epsilon_2\sum_{i = 1}^rg(S_i) + g(\sinit) \ge \sum_{i=1}^rg_{S_{i-1}}(o_i) \ge \sum_{i = 1}^rg_{\sinit}(o_i) \ge g_{\sinit}(O) = g(O \cup \sinit) - g(\sinit),
\end{equation*}
where we have used submodularity of $g$ for the second and third inequalities.  Then, using monotonicity of $g$, we have $3\epsilon_2 \sum_{i = 1}^rg(\sinit) \ge 3\epsilon_2 \sum_{i = 1}^rg(S_i)$ on the left, and
$g(O \cup \sinit) \ge g(\sinit)$ on the right, and so
\begin{equation}
\label{eq:g-greedy-approx}
3r\epsilon_2 g(\sinit) + 2g(\sinit) \ge g(O).
\end{equation}

Finally, by the sampling assumption we must have $\gapp(\sinit) \ge (1 - \epsilon_2)g(\sinit)$ and also $(1 + \epsilon_2)g(O) \ge (1 + \epsilon_2)g(A) \ge \gapp(A)$ for any set $A$ considered by the algorithm.  Thus, \eqref{eq:g-greedy-approx} implies
\[
(2 + 3r\epsilon_2)\left(\frac{1 + \epsilon_2}{1 - \epsilon_2}\right)\gapp(\sinit) \ge \gapp^*. \qedhere
\]
\end{proof}

The next difficulty we must overcome is that the final set $S$ produced by Algorithm \ref{alg:2} is (approximately) locally optimal only with respect to the sampled function $\gapp(S)$.  In order to use Theorem  \ref{thm:fg-related} to obtain a lower bound on $f(S)$, we must show that $S$ is approximately locally optimal with respect to $g$ as well.  We accomplish this in our next lemma, by showing that any significant improvement in $\gapp$ must correspond to a (somewhat less) significant improvement in $g$.

\begin{lemma} \label{lem:local-optimality-2}
Suppose that the sampling assumption holds and that $\gapp(A) \le (1 + \epsilon_2)\gapp(B)$ for some pair of sets $A,B$ considered by Algorithm \ref{alg:2}.  Then
\[
g(A) \le (1 + 4\epsilon_2)g(B)
\]
\end{lemma}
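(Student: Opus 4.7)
The plan is to chain together the three inequalities provided by the sampling assumption on $A$ and $B$ with the hypothesis $\gapp(A) \le (1+\epsilon_2)\gapp(B)$, and then simplify using the bound $\epsilon_2 \le 1/12$ that the paper stipulates.

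Concretely, first I would expand the sampling assumption as the two-sided bound $(1-\epsilon_2) g(X) \le \gapp(X) \le (1+\epsilon_2) g(X)$ for any $X$ considered by Algorithm~\ref{alg:2}. Applying the lower bound to $A$, the hypothesis to the pair $(A,B)$, and the upper bound to $B$, I get the chain
\[
(1-\epsilon_2)\, g(A) \;\le\; \gapp(A) \;\le\; (1+\epsilon_2)\, \gapp(B) \;\le\; (1+\epsilon_2)^2\, g(B),
\]
which rearranges to
\[
g(A) \;\le\; \frac{(1+\epsilon_2)^2}{1-\epsilon_2}\, g(B).
\]

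The only remaining step is to verify that $(1+\epsilon_2)^2/(1-\epsilon_2) \le 1+4\epsilon_2$ for the range of $\epsilon_2$ in question. Clearing denominators, this is equivalent to $1 + 2\epsilon_2 + \epsilon_2^2 \le 1 + 3\epsilon_2 - 4\epsilon_2^2$, i.e.\ $5\epsilon_2^2 \le \epsilon_2$, which holds whenever $\epsilon_2 \le 1/5$. Since the paper's standing assumption gives $\epsilon_2 \le 1/12$, the bound applies and the lemma follows.

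There is essentially no obstacle here; this is a routine algebraic translation of "approximately locally optimal with respect to $\gapp$'' into "approximately locally optimal with respect to $g$,'' absorbing the extra factors coming from two applications of the sampling assumption into the $4\epsilon_2$ slack. The only minor care point is ensuring the constant $4$ (rather than, say, $3$) is genuinely large enough on the stated range of $\epsilon_2$, which the calculation above confirms.
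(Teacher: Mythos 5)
Your proof is correct and is essentially identical to the paper's: both chain the two-sided sampling bounds on $A$ and $B$ with the hypothesis to get $g(A) \le \frac{(1+\epsilon_2)^2}{1-\epsilon_2} g(B)$, and then use $\epsilon_2 \le 1/5$ to absorb this into $(1+4\epsilon_2)g(B)$. Your explicit verification of the final algebraic inequality matches the step the paper leaves to the reader.
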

\begin{proof}
From the sampling assumption, we have
\[
(1 - \epsilon_2)g(A) \le \gapp(A) \le (1 + \epsilon_2)\gapp(B) \le (1 + \epsilon_2)(1 + \epsilon_2)g(B).
\]
Thus
\[
 g(A) \leq \frac{(1+\epsilon_2)^2}{1-\epsilon_2} g(B) \leq (1 + 4\epsilon_2)g(B),
\]
where the second inequality holds since $\epsilon_2 \leq 1/5$.
\end{proof}

We now prove our main result.
\begin{theorem}
\label{thm:alg-2}
Algorithm 2 runs in time $\tilde{O}(r^4n\epsilon^{-3}\alpha)$ and returns a $\left(\frac{1-e^{-c}}{c} - \epsilon\right)$-approximation with probability $1 - O(n^{-\alpha})$.
\end{theorem}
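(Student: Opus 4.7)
The plan is to condition on the sampling assumption of Lemma \ref{lem:sampling-assumption}, which holds with probability $1 - O(n^{-\alpha})$, and derive both claims deterministically under that event; the overall failure probability then matches the statement. For the runtime, I would substitute $\epsilon_2 = \epsilon/(4rH_r)$ and the value $I = O(1/\epsilon_2) = \tilde{O}(r/\epsilon)$ into the formula for $N$ to obtain $N = \tilde{O}(r^2 \epsilon^{-2} \alpha)$ samples of $f$ per call to $\gapp$. The greedy phase and each local-search iteration each issue $O(rn)$ calls to $\gapp$, giving $O((I+1)rn) = \tilde{O}(r^2 n/\epsilon)$ calls in total and therefore $\tilde{O}(r^4 n \epsilon^{-3} \alpha)$ queries to the value oracle for $f$.

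For correctness, I first argue the algorithm does not return \texttt{Error}. By Lemma \ref{lem:greedy-phase-2}, $\gapp^* \leq \tfrac{1+\epsilon_2}{1-\epsilon_2}(2+3r\epsilon_2)\,\gapp(\sinit)$. Each accepted improvement strictly multiplies the stored value $v$ by more than $1+\epsilon_2$, so after $I$ improvements Bernoulli's inequality and the definition of $I$ give
\[
v > (1+\epsilon_2)^I\gapp(\sinit) \geq (1 + I\epsilon_2)\gapp(\sinit) = \tfrac{1+\epsilon_2}{1-\epsilon_2}(2+3r\epsilon_2)\gapp(\sinit) \geq \gapp^*,
\]
contradicting $v \leq \gapp^*$. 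Hence the \texttt{done} flag becomes $\mathbf{true}$ within $I$ iterations and the algorithm returns some set $S$.

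For the approximation ratio, Brualdi's lemma supplies an indexing of $S$ and an optimal base $O$ with $S - s_i + o_i \in \II$ for every $i$. The termination condition gives $\gapp(S - s_i + o_i) \leq (1+\epsilon_2)\,v \leq (1+\epsilon_2)\,\gapp(S)$, and Lemma \ref{lem:local-optimality-2} upgrades this to $g(S - s_i + o_i) \leq (1 + 4\epsilon_2)\,g(S)$ for each $i$. Substituting the $r$ inequalities $g(S) - g(S - s_i + o_i) \geq -4\epsilon_2\,g(S)$ into Theorem \ref{thm:fg-related} applied to $(S,O)$ yields
\[
\frac{ce^c}{e^c-1} f(S) \geq f(O) - 4r\epsilon_2\,g(S).
\]
Combining with $g(S) \leq \frac{ce^c}{e^c-1} H_r f(S) \leq \frac{ce^c}{e^c-1} H_r f(O)$ from Lemma \ref{lem:g-magnitude} and monotonicity of $f$, then multiplying both sides by $\frac{e^c-1}{ce^c}$ and using $\epsilon_2 = \epsilon/(4rH_r)$, delivers $f(S) \geq \left(\frac{1-e^{-c}}{c} - \epsilon\right) f(O)$. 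The main subtlety is keeping straight the \emph{stored} sample $v$ versus \emph{fresh} samples $\gapp(S')$ and $g$ itself, which is why the algorithm's acceptance criterion loses a factor of roughly $1+4\epsilon_2$ (rather than $1+\epsilon_2$) when transferred to $g$; Lemma \ref{lem:local-optimality-2} was designed precisely to absorb this slack, so once it is in hand the remaining work is routine chaining of the preceding lemmas.
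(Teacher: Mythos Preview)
Your proposal is correct and follows essentially the same route as the paper: condition on the sampling assumption, rule out the \texttt{Error} branch via Lemma~\ref{lem:greedy-phase-2} and Bernoulli's inequality, transfer approximate local optimality from $\gapp$ to $g$ via Lemma~\ref{lem:local-optimality-2}, and plug into Theorem~\ref{thm:fg-related} together with Lemma~\ref{lem:g-magnitude}. The only cosmetic difference is that you front-load the runtime analysis and are a bit more explicit about the stored value $v$ versus fresh samples; one tiny quibble is that $f(S)\le f(O)$ follows from optimality of $O$ rather than monotonicity of $f$, but this does not affect the argument.
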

\begin{proof}
As in the proof of Theorem \ref{thm:alg-1}, we consider some arbitrary instance $(\mm=(\UU,\II),f)$ of monotone submodular matroid maximization with upper bound $c$ on the curvature of $f$, and let $O$ be an optimal solution of this instance.  We shall show that if the sampling assumption holds, Algorithm \ref{alg:2} returns a solution $S$ satisfying $f(S) \ge \left(\frac{1 - e^{-c}}{c} - \epsilon\right)f(O)$.  Then, Lemma \ref{lem:sampling-assumption} shows that this happens with probability $1 - O(n^{-\alpha})$.

As in Algorithm \ref{alg:2}, set
\[
I = \left(\left(\frac{1 + \epsilon_2}{1 - \epsilon_2}\right)(2 + 3r\epsilon_2) - 1\right)\epsilon_2^{-1}.
\]
Suppose that the sampling assumption holds, and let $g^*$ be the maximum value taken by $\gapp(A)$ for any set $A$ considered by Algorithm \ref{alg:2}.  At each iteration of Algorithm \ref{alg:2}, either a set $S$ is returned, or the value $v$ is increased by a factor of at least $(1 + \epsilon_2)$.  Suppose that the local search phase of Algorithm \ref{alg:2} fails to converge to a local optimum after $I$ steps, and so does not return a solution $S$.  Then we must have
\[
v \ge (1 + \epsilon_2)^I\gapp(\sinit) > (1 + I\epsilon_2)\gapp(\sinit) =
\left(\frac{1 + \epsilon_2}{1 - \epsilon_2}\right)(2 + 3r\epsilon_2)\gapp(\sinit) \ge g^*,
\]
where the last inequality follows from Lemma \ref{lem:greedy-phase-2}.  But, then we must have $\gapp(A) > g^*$ for some set $A$ considered by the algorithm.  Thus Algorithm \ref{alg:2} must produce a solution $S$.  

As in Theorem \ref{thm:alg-1}, we apply Theorem \ref{thm:fg-related} to the bases $S$ and $O$, indexing $S$ and $O$ as in the theorem so that $S - s_i + o_i \in \II$ for all $i \in [r]$, to obtain:
\begin{equation}
\label{eq:fg-related-2}
  \frac{ce^c}{e^c-1} f(S) \geq f(O) + \sum_{i=1}^r [g(S) - g(S - s_i + o_i).]
\end{equation}
Then, since Algorithm \ref{alg:2} returned $S$, we must then have:
\[
\gapp(S - s_i + o_i) \le (1 + \epsilon_2)\gapp(S)
\]
for all $i \in [r]$.  From Lemma \ref{lem:local-optimality-2} we then have
\[
g(S - s_i + o_i) \le (1 + 4\epsilon_2)g(S)
\]
for all $i \in [r]$.  Summing the resulting $r$ inequalities gives
\[
\sum_{i = 1}^r\left[g(S) - g(S - s_i + o_i) \right]\ge -4r\epsilon_2g(S).
\]
Applying Theorem \ref{thm:alg-1} and the upper bound on $g(S)$ from Lemma \ref{lem:g-magnitude} in \eqref{eq:fg-related-2}, we then have
\[
\frac{ce^c}{e^c - 1}f(S) \ge f(O) - 4r\epsilon_2g(S) \ge 
f(O) - \frac{ce^c}{e^c - 1}4r\epsilon_2H_rf(S) \ge f(O) - \frac{ce^c}{e^c - 1}4r\epsilon_2H_rf(O).
\]
Rewriting this inequality using the definition $\epsilon_2 = \frac{\epsilon}{4rH_r}$ then gives
\[
f(S) \ge \left(\frac{1 - e^{-c}}{c}-\epsilon\right)f(O).
\]

The running time of Algorithm \ref{alg:2} is dominated by the number of calls it makes to the value oracle for $f$. We note, as in the proof of Lemma \ref{lem:sampling-assumption}, that the algorithm evaluates $\gapp(A)$ on $O(rnI)$ sets $A$.  Each evaluation requires $N$ samples of $f$, and so the resulting algorithm requires 
\[
O(rnIN) = \tilde{O}(rn \epsilon_2^{\,-3}\alpha) = \tilde{O}(r^4n\epsilon^{-3}\alpha)
\]
calls to the value oracle for $f$.
\end{proof}

\section{Extensions} \label{sec:extensions}

The algorithm presented in Section \ref{sec:algorithm} produces a $(1-e^{-c})/c - \epsilon$ approximation for any $\epsilon > 0$, and it requires knowledge of $c$. In this section we show how to produce a clean $(1-e^{-c})/c$ approximation, and how to dispense with the knowledge of $c$. Unfortunately, we are unable to combine both improvements for technical reasons.

It will be useful to define the function
\[ \rho(c) = \frac{1-e^{-c}}{c}, \]
which gives the optimal approximation ratio.

\subsection{Clean approximation} \label{sec:clean-approximation}

In this section, we assume $c$ is known, and our goal is to obtain a $\rho(c)$ approximation algorithm. We accomplish this by combining the algorithm from Section \ref{sec:rp} with partial enumeration. 

For $x \in \UU$ we consider the contracted matroid $\mm  / x$ on $\UU - x$ whose independent sets are given by $\II_x = \{ A \subseteq \UU - x : A + x \in \II \}$, and the contracted submodular function $(f / x)$ which is given by $(f / x)(A) = f(A + x)$.  It is easy to show that this function is a monotone submodular function whenever $f$ is, and has curvature at most that of $f$.  Then, for each $x \in \UU$, we apply Algorithm \ref{alg:2} to the instance $(\mm/x, f/x)$ to obtain a solution $S_x$.  We then return $\argmax_{x \in \UU}f(S_x)$.

Fisher, Nemhauser, and Wolsey \cite{Nemhauser-1978} analyze this technique in the case of submodular maximization over a uniform matroid, and Khuller, Moss, and Naor \cite{Khuller-1999} make use of the same technique in the restricted setting of budgeted maximum coverage.  Calinescu et al.~\cite{Calinescu-2007} use a similar technique to eliminate the error term from the approximation ratio of the continuous greedy algorithm for general monotone submodular matroid maximization.  Our proof relies on the following general claim.
\begin{lemma} \label{lem:derived-instance}
 Suppose $A \subseteq O$ and $B \subseteq \UU\setminus A$ satisfy $f(A) \geq (1-\theta) f(O)$ and $f_A(B) \geq (1-\theta_B) f_A(O\setminus A)$. Then
 \[ f(A \cup B) \geq (1 - \theta_A \theta_B) f(O). \]
\end{lemma}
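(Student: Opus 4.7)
The plan is to prove this by a direct algebraic manipulation, using the decomposition $f(A \cup B) = f(A) + f_A(B)$ together with the crucial identity $f_A(O \setminus A) = f(O) - f(A)$, which holds because $A \subseteq O$ implies $A \cup (O \setminus A) = O$. (I am reading the first hypothesis as $f(A) \geq (1 - \theta_A) f(O)$, since the conclusion features $\theta_A$ rather than $\theta$.)

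First I would expand $f(A \cup B)$ as $f(A) + f_A(B)$, then plug in the second hypothesis to get
\[
f(A \cup B) \;\geq\; f(A) + (1 - \theta_B) f_A(O \setminus A) \;=\; f(A) + (1 - \theta_B)\bigl(f(O) - f(A)\bigr).
\]
Collecting terms, the right-hand side equals $\theta_B f(A) + (1 - \theta_B) f(O)$.

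Next I would apply the first hypothesis $f(A) \geq (1 - \theta_A) f(O)$ (both coefficients $\theta_B$ and $1 - \theta_B$ are nonnegative, assuming $\theta_B \in [0,1]$, so the inequality goes the right way), giving
\[
f(A \cup B) \;\geq\; \theta_B (1 - \theta_A) f(O) + (1 - \theta_B) f(O) \;=\; (1 - \theta_A \theta_B) f(O),
\]
which is exactly the claimed bound.

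There is no real obstacle here: the lemma is a purely algebraic consequence of the definition of marginals and the hypothesis $A \subseteq O$, with no appeal to submodularity or curvature needed. The only mild subtlety is making sure the sign of $\theta_B$ permits substituting the lower bound on $f(A)$ into the expression $\theta_B f(A) + (1 - \theta_B) f(O)$; this is fine as long as $0 \leq \theta_B \leq 1$, which is the regime in which the lemma will be applied (the $\theta$'s measure approximation losses).
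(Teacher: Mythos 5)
Your proof is correct and follows essentially the same route as the paper's: the decomposition $f(A \cup B) = f(A) + f_A(B)$, the identity $f_A(O \setminus A) = f(O) - f(A)$ (valid since $A \subseteq O$), and the final substitution of the lower bound on $f(A)$ weighted by $\theta_B$. You also correctly identified the typo in the statement ($1-\theta$ should read $1-\theta_A$) and the implicit sign condition $\theta_B \geq 0$ needed for the last step, which the paper uses tacitly.
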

\begin{proof}
We have
\begin{align*}
 f(A \cup B) &= f_A(B) + f(A) \\ &\geq (1 - \theta_B) f_A(O \setminus A) + f(A) \\ &= (1 - \theta_B) f(O) + \theta_B f(A) \\ &\geq (1 - \theta_B) f(O) + \theta_B (1 - \theta_A) f(O) \\ &= (1 - \theta_A \theta_B) f(O). \qedhere
\end{align*}
\end{proof}

Using Lemma \ref{lem:derived-instance} we show that the partial enumeration procedure gives a clean $\rho(c)$-approximation algorithm.

\begin{theorem} \label{thm:clean-approximation}
The partial enumeration algorithm runs in time $\tilde{O}(r^7 n^2 \alpha)$, and with probability $1-O(n^{-\alpha})$, the algorithm has an approximation ratio of $\rho(c)$.
\end{theorem}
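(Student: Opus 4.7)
The plan is to run Algorithm~\ref{alg:2} once for every $x \in \UU$ on the contracted, normalized instance $(\mm/x, \hat{f}_x)$, where $\hat{f}_x(A) = f(A+x) - f(x)$, using error parameter $\epsilon' = (1-\rho(c))/(r-1)$ and confidence parameter $\alpha+1$, and then to return $S_{x^*} + x^*$ for whichever $x^* \in \UU$ maximizes $f(S_x + x)$. A preliminary bookkeeping step is to check that $\hat{f}_x$ is a legal input for Algorithm~\ref{alg:2}: normalization, monotonicity and submodularity are immediate from the definitions (its marginals are $(\hat{f}_x)_A(y) = f_{A+x}(y)$), and curvature at most $c$ follows from
\[ (\hat{f}_x)_A(y) = f_{A+x}(y) \geq (1-c)f(y) \geq (1-c)f_x(y) = (1-c)\hat{f}_x(y), \]
using the curvature of $f$ together with $f_x(y) \leq f(y)$ (submodularity). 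A union bound over the $n$ invocations, combined with the inflated confidence parameter $\alpha+1$, guarantees that all invocations satisfy the conclusion of Theorem~\ref{thm:alg-2} simultaneously with probability $1-O(n \cdot n^{-(\alpha+1)}) = 1-O(n^{-\alpha})$.

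The heart of the argument is to identify one ``good'' pivot inside an optimum. Fix an optimal base $O$ and let $x^* \in O$ maximize $f(x)$ over $x \in O$; since $f$ is normalized and submodular, $f(O) \leq \sum_{y \in O} f(y) \leq r f(x^*)$, so $f(x^*) \geq f(O)/r$. Because $O - x^* \in \II_{x^*}$ and $\hat{f}_{x^*}(O - x^*) = f(O) - f(x^*)$, the optimum of the contracted instance at $x^*$ is at least $f(O) - f(x^*)$, and Theorem~\ref{thm:alg-2} then yields
\[ \hat{f}_{x^*}(S_{x^*}) \geq (\rho(c) - \epsilon')\bigl(f(O) - f(x^*)\bigr). \]
I would then apply Lemma~\ref{lem:derived-instance} with $A = \{x^*\}$, $B = S_{x^*}$, $\theta_A = 1 - f(x^*)/f(O) \leq 1 - 1/r$, and $\theta_B = 1 - \rho(c) + \epsilon'$. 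The choice $\epsilon' = (1-\rho(c))/(r-1)$ is designed so that $(1 - 1/r)(1 - \rho(c) + \epsilon') = 1 - \rho(c)$, whence $\theta_A \theta_B \leq 1 - \rho(c)$ and $f(S_{x^*} + x^*) \geq (1 - \theta_A \theta_B) f(O) \geq \rho(c) f(O)$. Since the algorithm reports a solution whose $f$-value is at least $f(S_{x^*} + x^*)$, the overall approximation ratio is $\rho(c)$.

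For the runtime, substituting $\epsilon' = \Theta(1/r)$ (absorbing the $c$-dependent constant $1-\rho(c)$ into the hidden constant) into the $\tilde{O}(r^4 n \epsilon'^{-3}(\alpha+1))$ cost of one invocation of Algorithm~\ref{alg:2} gives $\tilde{O}(r^7 n \alpha)$ per pivot, and summing over the $n$ choices of $x$ produces the claimed $\tilde{O}(r^7 n^2 \alpha)$. The only delicate point is the balance between two competing demands on $\epsilon'$: it must be small enough that the $\epsilon'$ slack in Algorithm~\ref{alg:2}'s guarantee is absorbed by the $f(x^*)/f(O) \geq 1/r$ contribution of the enumerated pivot, yet large enough that $\epsilon'^{-3}$ does not inflate the runtime beyond the stated bound. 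The algebraic identity above pins down $\epsilon' = (1-\rho(c))/(r-1)$ as the natural choice that does both, and everything else is a routine union bound plus cost accounting.
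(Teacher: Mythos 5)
Your proposal is correct and follows essentially the same route as the paper: partial enumeration over a pivot $x$, using submodularity to find $x^* \in O$ with $f(x^*) \geq f(O)/r$, running Algorithm~\ref{alg:2} on the contracted instance with error parameter $\Theta\bigl((1-\rho(c))/r\bigr)$, and combining via Lemma~\ref{lem:derived-instance} with $\theta_A \leq 1-1/r$ so that the product $\theta_A\theta_B$ collapses to $1-\rho(c)$. Your minor refinements (explicitly normalizing $\hat{f}_x$, verifying its curvature, relating the contracted optimum to $f(O)-f(x^*)$, and the union bound with confidence $\alpha+1$) only tighten details the paper treats more casually, and the runtime accounting matches.
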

\begin{proof}
 Let $O = \{o_1,\ldots,o_r\}$ be an optimal solution to some instance $(\mm,f)$. Since submodularity of $f$ implies
 \[ \sum_{i=1}^r f(o_i) \geq f(O), \]
 there is some $x \in O$ such that $f(x) \geq f(O)/r$. Take $A = \{x\}$ and $B = S_x$ in Lemma~\ref{lem:derived-instance}.  Then, from Theorem \ref{thm:alg-2} we have $f(S_x) \ge (\rho(c) - \epsilon)f(O)$ with probability $1 - O(n^{-\alpha})$ for any $\epsilon$.  We set $\epsilon = (1 - \rho(c))/r$.  Then, substituting $\theta_A = 1-1/r$ and $\theta_B = 1 - \rho(c) + (1 - \rho(c))/r$, we deduce that the resulting approximation ratio in this case is
 \begin{align*}
  1 - \left(1-\frac{1}{r}\right)\left(1 - \rho(c) + \frac{1-\rho(c)}{r}\right) &=
  1 - \left(1-\frac{1}{r}\right)\left(1+\frac{1}{r}\right)(1-\rho(c)) \\ &\geq
  1 - (1-\rho(c)) = \rho(c).
 \end{align*}
The partial enumeration algorithm simply runs Algorithm \ref{alg:2} $n$ times, using $\epsilon = O(r^{-1})$ and so its running time is $\tilde{O}(r^7n^2\alpha)$.
\end{proof}

\subsection{Unknown curvature} \label{sec:unknown-curvature}
In this section, we remove the assumption that $c$ is known, but retain the error parameter $\epsilon$. The key observation is that if a function has curvature $c$ then it also has curvature $c'$ for any $c' \geq c$. This, combined with the continuity of $\rho$, allows us to ``guess'' an approximate value of $c$.

Given $\epsilon$, consider the following algorithm. Define the set $C$ of curvature approximations by
\[ C = \{ k \epsilon : 1 \leq k \leq \lfloor \epsilon^{-1} \rfloor \} \cup \{ 1 \}. \]
For each guess $c' \in C$, we run the main algorithm with that setting of $c'$ and error parameter $\epsilon/2$ to obtain a solution $S_{c'}$. Finally, we output the set $S_{c'}$ maximizing $f(S_{c'})$.

\begin{theorem} \label{thm:unknown-curvature}
Suppose $f$ has curvature $c$.
The unknown curvature algorithm runs in time $\tilde{O}(r^4n\epsilon^{-4}\alpha)$, and with probability  $1-O(n^{-\alpha})$, the algorithm has an approximation ratio of $\rho(c)-\epsilon$.
\end{theorem}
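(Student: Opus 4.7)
}

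The plan is to exhibit, for the true curvature $c$, a good ``guess'' $c' \in C$ and then combine the guarantee of Algorithm \ref{alg:2} at $c'$ with the continuity of $\rho$. First I would observe that $C$ is an $\epsilon$-net of $(0,1]$: any $c \in (0,1]$ has a member $c' \in C$ with $c \leq c' \leq c + \epsilon$. Indeed, if $c \leq \lfloor \epsilon^{-1}\rfloor \epsilon$, choose the smallest multiple of $\epsilon$ at least $c$; otherwise, $c > 1-\epsilon$ and $c' = 1 \in C$ works. Since $f$ has curvature $c \leq c'$, it also has curvature $c'$, so Theorem~\ref{thm:alg-2} applied to $(\mm,f)$ with curvature bound $c'$ and error parameter $\epsilon/2$ yields
\[
f(S_{c'}) \geq \bigl(\rho(c') - \epsilon/2\bigr) f(O)
\]
with probability $1 - O(n^{-\alpha})$.

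The next step is an analytic bound showing $\rho(c') \geq \rho(c) - \epsilon/2$. One computes
\[
\rho'(t) = \frac{(t+1)e^{-t}-1}{t^2},
\]
which extends continuously to $\rho'(0) = -1/2$, is monotonically increasing toward $\rho'(1) = 2/e-1 \approx -0.264$, and in particular satisfies $|\rho'(t)| \leq 1/2$ on $(0,1]$. Hence $\rho$ is $\tfrac12$-Lipschitz, so $|\rho(c') - \rho(c)| \leq \tfrac12(c'-c) \leq \epsilon/2$. Combining with the previous bound,
\[
f(S_{c'}) \geq \bigl(\rho(c) - \epsilon/2 - \epsilon/2\bigr) f(O) = \bigl(\rho(c) - \epsilon\bigr)f(O),
\]
and the output $\operatorname{argmax}_{c'' \in C} f(S_{c''})$ is at least as good.

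For the probability bound, I apply a union bound over the $|C| = O(\epsilon^{-1})$ invocations of Algorithm \ref{alg:2}; each fails with probability $O(n^{-\alpha})$, so the total failure probability is $O(\epsilon^{-1} n^{-\alpha}) = O(n^{-\alpha'})$ after a harmless constant adjustment to $\alpha$ (or, equivalently, inflating $\alpha$ by an $O(\log_n \epsilon^{-1})$ additive term absorbed by the $\tilde O$ in the runtime). For the running time, each of the $O(\epsilon^{-1})$ invocations uses error $\epsilon/2$, so by Theorem \ref{thm:alg-2} it costs $\tilde O(r^4 n \epsilon^{-3}\alpha)$; summing gives $\tilde O(r^4 n \epsilon^{-4} \alpha)$, matching the claim.

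The only real subtlety is the Lipschitz estimate for $\rho$ on $(0,1]$; everything else is bookkeeping. Once that is in hand, the net-based argument is routine. A minor care point is handling the corner case $c$ close to $0$: one must ensure $C$ contains a value at least $c$, which is why $C$ starts at $k=1$ (so the smallest element is $\epsilon$), and one should note that the theorem's hypothesis implicitly assumes $c > 0$ (the footnote after Section~\ref{sec:contribution} already warns that unknown curvature must be bounded away from zero).
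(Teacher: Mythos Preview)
Your proposal is correct and follows essentially the same route as the paper: pick $c'\in C$ with $c\le c'\le c+\epsilon$, use that $f$ has curvature $c'$ to invoke Theorem~\ref{thm:alg-2} with error $\epsilon/2$, and then use $|\rho'|\le 1/2$ on $(0,1]$ to absorb the $\epsilon$ gap between $c$ and $c'$. One small simplification: you do not need a union bound over all $c''\in C$. Since the algorithm outputs $\argmax_{c''\in C} f(S_{c''})$, it suffices that the single run at the distinguished $c'$ succeeds, which already happens with probability $1-O(n^{-\alpha})$; this gives the stated probability directly without any adjustment to $\alpha$.
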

\begin{proof}
 From the definition of $C$ it is clear that there is some $c' \in C$ satisfying $c \leq c' \leq c + \epsilon$. Since $f$ has curvature $c$, the set $S_{c'}$ is a $\rho(c') - \epsilon/2$ approximation. Elementary calculus shows that on $(0,1]$, $\rho' \geq -1/2$, and so we have
\[\rho(c') - \epsilon/2 \ge \rho(c + \epsilon) - \epsilon/2 \ge \rho(c) + \epsilon/2 - \epsilon/2 = \rho(c) - \epsilon. \qedhere\]
\end{proof}

\subsection{Maximum Coverage}
\label{sec:maximum-coverage}
In the special case that $f$ is given explicitly as a \emph{coverage function} function, we can evaluate the potential function $g$ exactly in polynomial time.  A (weighted) coverage function is a particular kind of monotone submodular function that may be given in the following way.  There is a universe $\otherU$ with non-negative weight function $w \colon \otherU \rightarrow \RRnn$.  The weight function is extended to subsets of $\otherU$ linearly, by letting $w(S) = \sum_{s \in S}w(s)$ for all $S \subseteq \otherU$.  Additionally, we are given a family $\{\otherUset{a}\}_{a \in \UU}$ of subsets of $\otherU$, indexed by a set $\UU$.  The function $f$ is then defined over the index set $\UU$, and $f(A)$ is simply the total weight of all elements of $\otherU$ that are covered by those sets whose indices appear in $A$.  That is, $f(A) = w\left(\bigcup_{a \in A} \otherUset{a} \right)$.

We now show how to compute the potential function $g$ exactly in this case.  For a set $A \subseteq \UU$ and an element $x \in \otherU$, we denote by $A[x]$ the collection
$\{a \in A : x \in \otherUset{a}\}$ of indices $a$ such that $x$ is in the set $\otherUset{a}$.  Then, recalling the definition of $g(A)$ given in \eqref{eq:g-direct-definition}, we have
\begin{align*}
g(A) &= \sum_{B \subseteq A}\mcoeff{|A| - 1}{|B| - 1}f(B) \\
&= \sum_{B \subseteq A}\mcoeff{|A| - 1}{|B| - 1}\!\!\!\!\sum_{x \in \bigcup_{b \in B}\otherUset{b}}\!\!\!\!w(x) \\
&= \sum_{x \in \otherU}w(x)\!\!\!\!\sum_{\substack{B \subseteq A\ s.t. \\ A[x] \cap B \neq \emptyset}}\!\!\!\!\mcoeff{|A| - 1}{|B| - 1}.
\end{align*}
Consider the coefficient of $w(x)$ in the above expression for $g(A)$.  We have
\begin{align*}
\sum_{\substack{B \subseteq A\ s.t. \\ A[x] \cap B \neq \emptyset}}\!\!\!\!\mcoeff{|A| - 1}{|B| - 1} 
&= \sum_{B \subseteq A}\mcoeff{|A| - 1}{|B| - 1}\ -\!\! \sum_{B \subseteq A \setminus A[x]}\mcoeff{|A| - 1}{|B| - 1} \\
&= \sum_{i = 0}^{|A|}\binom{|A|}{i}\mcoeff{|A| - 1}{i - 1}\ -\!\! \sum_{i = 0}^{|A\setminus A[x]|}\binom{|A \setminus A[x]|}{i}\mcoeff{|A| - 1}{i - 1} \\
&= \sum_{i = 0}^{|A|}\binom{|A|}{i}\EE_{p \sim P}[p^{i-1}(1 - p)^{|A| - i}]\  -\!\! \sum_{i = 0}^{|A \setminus A[x]|}\binom{|A \setminus A[x]|}{i}\EE_{p \sim P}[p^{i-1}(1 - p)^{|A| - i}] \\
&=\!\EE_{p \sim P}\!\left[\frac{1}{p}\sum_{i = 0}^{|A|}\binom{|A|}{i}p^{i}(1 - p)^{|A| - i} - \frac{(1 - p)^{|A[x]|}}{p}\sum_{i = 1}^{|A \setminus A[x]|}\!\!\binom{|A \setminus A[x]|}{i}p^{i}(1 - p)^{|A \setminus A[x]| - i}\right] \\
&=\!\EE_{p \sim P}\!\left[\frac{1 - (1 - p)^{|A[x]|}}{p}\right].
\end{align*}
Thus, if we define
\[
\ell_k = \EE_{p \sim P}\left[\frac{1 - (1 - p)^k}{p}\right]
\]
we have
\[
g(A) = \sum_{x \in \otherU} \ell_{|A[x]|}w(x),
\]
and so to compute $g$, it is sufficient to maintain for each element $x \in \otherU$ a count of the number of sets $A[x]$ with indices in $A$ that contain $x$.  Using this approach, each change in $g(S)$ resulting from adding an element $x$ to $S$ and removing an element $e$ from $S$ during one step of the local search phase of Algorithm \ref{alg:1} can be computed in time $O(|\otherU|)$.

We further note that the coefficients $\ell_k$ are easily calculated using the following recurrence.  For $k = 0$, 
\[
\ell_0 = \EE_{p \sim P}\left[\frac{1 - (1 - p)^0}{p}\right] = 0,
\]
while for $k > 0$, 
\[
\ell_{k+1} = \EE_{p \sim P}\left[\frac{1 - (1 - p)^{k + 1}}{p}\right] = \EE_{p \sim P}\left[\frac{1 - (1 - p)^k + p(1 - p)^k}{p}\right] = \ell_k + \EE_{p \sim P}(1 - p)^k = \ell_k + \mcoeff{k}{0}.
\]
The coefficients $\ell_k$ obtained in this fashion in fact correspond (up to a constant scaling factor) to those used to define the non-oblivious coverage potential in \cite{Filmus-2012}, showing that our algorithm for monotone submodular matroid maximization is indeed a generalization of the algorithm already obtained in the coverage case.
\bibliographystyle{plain}
\bibliography{Papers}
\end{document}